\newtheorem{theorem}{Theorem}[section]
\newtheorem{proposition}[theorem]{Proposition}
\newtheorem{remark}[theorem]{Remark}
\newenvironment{proof}[1][Proof]{\begin{trivlist}
\item[\hskip \labelsep {\bfseries #1}]}{\end{trivlist}}
\def\E{\mbox{\rm E}}
\def\P{\mbox{P}}
\def\var{\mbox{Var}}
\def\gamma{\mbox{Gamma}}
\newcommand{\qed}{\nobreak \ifvmode \relax \else
      \ifdim\lastskip<1.5em \hskip-\lastskip
      \hskip1.5em plus0em minus0.5em \fi \nobreak
      \vrule height0.75em width0.5em depth0.25em\fi}
\newif\iflatexml\latexmlfalse
 \newcommand{\R}{R}
\def\E{\mbox{\rm E}}
\def\P{\mbox{P}}
\def\var{\mbox{Var}}
\title{Spatial modeling of randomly acquired characteristics on outsoles with application to forensic shoeprint analysis}
\author{Naomi Kaplan-Damary\footnote{Department of Statistics, The Hebrew University of Jerusalem, 91905 Jerusalem, Israel. naomi.kaplan@mail.huji.ac.il}, Micha Mandel\footnote{Department of Statistics, The Hebrew University of Jerusalem, 91905 Jerusalem, Israel. micha.mandel@mail.huji.ac.il }, Yoram Yekutieli \footnote{Hadassah Academic College
		Dept. of Computer Science
		37 Hanevi'im St. P.O.Box 1114,
		9101001 Jerusalem, Israel.
		yoramye@hac.ac.il}, Sarena Wiesner\footnote{Israel National Police
		Division of Identification and Forensic Science (DIFS)
		1 Bar-Lev Road
		91906 Jerusalem, Israel. sarenawiz@gmail.com}, Yaron Shor\footnote{Israel National Police Division of Identification and Forensic Science (DIFS) 1 Bar-Lev Road
		91906 Jerusalem, Israel. yaronshor@gmail.com}.}
\begin{document}

\date{}
\maketitle

\selectlanguage{english}
\begin{abstract}
Footwear comparison is used to link between a suspect's shoe and a footprint found at a crime scene. Investigators compare the two items using randomly acquired characteristics (RACs), such as scratches or holes. However, to date, the distribution of RAC characteristics has not been investigated thoroughly, and the evidential value of RACs is yet to be explored. An important question concerns the distribution of the location of RACs on shoe soles, which can serve as a benchmark for comparison. The location of RACs is modeled here as a point process over the shoe sole and a data set of 386 independent shoes is used  to estimate its rate function. The analysis is somewhat complicated as the shoes are differentiated by shape, level of wear and tear and contact surface. This paper presents methods that take into account these challenges, either by using natural cubic splines on high resolution data, or by using a piecewise-constant model on larger regions defined by experts' knowledge. It is shown that RACs are likely to appear at certain locations, corresponding to the foot’s morphology. The results can guide investigators in determining the evidential value of footprint comparison.

\textbf{Keywords} --- Case-control sampling, Conditional maximum likelihood,  
Random effects model, Randomly acquired characteristics (RACs), Shoeprints%
\end{abstract}%

\section{Introduction}
In recent years, forensic methods have been criticized for their shortcomings in providing courts with objective and quantitative answers to the question of whether a sample from a suspect matches a sample found at the crime scene. Unlike DNA that is used routinely to link suspects to crime scenes because of its scientific objectivity and accessible documentation, the evaluation of other types of evidence such as shoeprints, hair, and even fingerprints has not reached this gold standard. Both the 2009 NRC report, ``Strengthening Forensic Science in the United States: A Path Forward,'' and the 2016 PCAST report to President Obama, ``Forensic Science in Criminal Courts: Ensuring Scientific Validity of Feature-Comparison Methods,'' have called for the strengthening of the scientific basis of forensic procedures (NRC, 2009; PCAST, 2016).

Here shoeprint comparison is considered. The identification of footwear impressions is based on the comparison of a print found at the crime scene with a  print made from a suspect's shoe (Bodziak, 1999); see 
Supporting web materials~1 and a short film (Kaplan-Damary, 2014) for a detailed description of the process of comparing shoeprints. The analysis of shoeprints by experts is done in two broad stages. First, the pattern, size and wear of the shoe sole are compared to the crime scene print. If these do not fit, the analysis is stopped and the pair is classified as a non-match. 
In the second stage, the forensic expert examines whether randomly acquired characteristics (RACs) on the shoe sole match the RACs on the print from the crime scene. These RACs have various characteristics, such as location, shape and orientation, that are used in the expert's evaluation.

\begin{figure}[tb]
	\begin{center}
		\includegraphics[width=0.5\columnwidth]{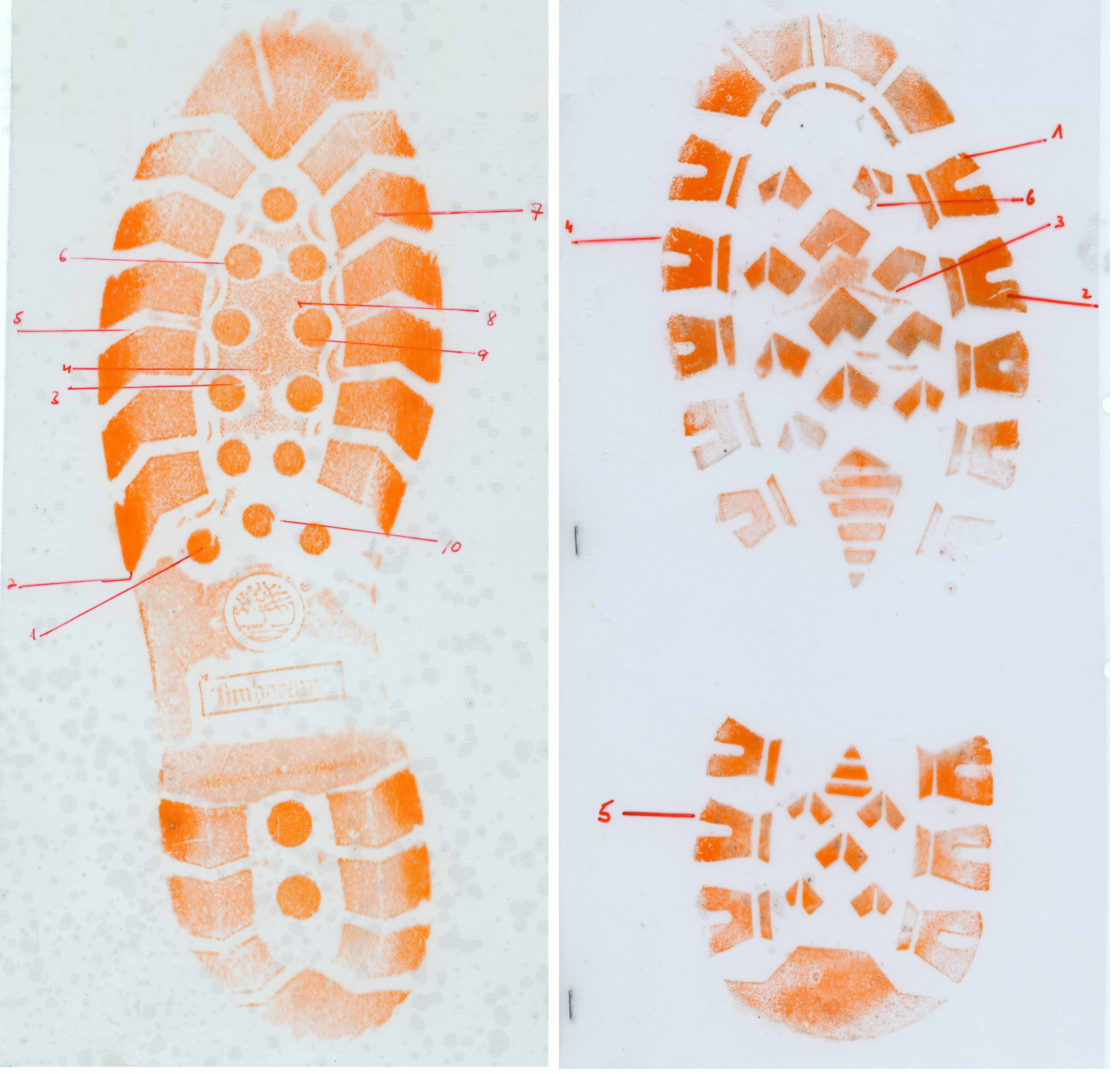}
	\end{center}
	\caption{Locations of RACs marked on lab prints of suspects' shoes (Courtesy of the Israel National Police Division of Identification and Forensic Science -- DIFS)}
	\label{fig:mark_RAC}
\end{figure}

The current study deals only with the location of RACs. Figure \ref{fig:mark_RAC} presents two lab prints taken from suspects' shoes, with the location of RACs marked by the examiner. The rarity of this set of RACs is of major interest, especially in establishing a link between suspect's shoes and the crime scene prints. Thus, the main focus of this study is understanding the spatial distribution of RACs' locations, and specifically whether they are distributed uniformly as assumed by Wiesner et al. (2019) and Stone (2006) or concentrated in certain areas. This is an essential step in evaluating the degree of rarity of a given set of RACs, i.e. the probability that a random shoeprint has a pattern of RACs that is sufficiently similar. Marks at sparsely populated locations would be of much greater value in determining a match than marks at highly populated locations
(Richetellia, 2019; Spencer and Murray, 2019).

Until recently, there were not enough data to reliably answer questions regarding the probability of RAC locations. Stone (2006) assumes a uniform distribution over the shoe sole as a simple theoretical model without any data to support it. Wiesner et al. (2019)  suggest a uniform model for the probability of location using a naive approach described in detail in Section 4. Richetelli, Bodziak and Speir (2019) use a large collection of about 1300 shoes and calculate the probability that RAC's in two independent shoes occur at the same location. While a simple adjustment is made to the difference in the contact surfaces in different shoes, the method neither exploits information from neighbor locations, nor produces an estimate for the rate function. Spencer and Murray (2019) use the database of  Wiesner et al. (2019)  and fit a quite complicated spatial hierarchical  Bayesian model to the location of RACs, allowing for shoe-specific rate functions. However, they use a local dependence model, which may be inappropriate when the location variable is subject to measurement errors.

Similar to Spencer and Murray (2019), here the RACs are modeled as a spatial process, and the intensity function is estimated in order to calculate the probability of observing RACs in different locations. We first study  a random effects continuous model for the rate function,  and then compare it to a simpler step-wise constant model that is less sensitive to measurement errors. This is done by using the Israeli Police Division of Identification and Forensic Science database of RACs, which includes 13,000 RACs from 386 lab shoeprints (Wiesner et al., 2019).

Estimation of intensity functions arises naturally in the framework of spatial statistics (Cressie, 1993), but there are several complications in the current analysis of shoe impressions.
First, spatial statistics typically deals with a single process that has many events, while in our case, there are many independent processes (shoe impressions), and each has only a small number of events (RACs); about 34 per shoe on average.
A second complication is the variability of shoes: they differ in their types and sizes. Moreover, different shoes have different contact surfaces i.e, the part of the sole that actually touches the floor or ground (see Figure~\ref{fig:mark_RAC}, which presents an example of lab prints. The area in orange is the contact surface). This fact limits the area in which RACs appear, thus affecting the probability of observing them.
On top of these difficulties, some shoes are scarred by many RACs while others have relatively few, apparently due to the level of wear and tear. This article models and estimates the intensity function while taking into account the challenges noted above.

The data used in this article are described in Sections 2 and the model is presented in Section 3. In Section 4, three estimators of the intensity function are introduced; a naive estimator, an estimator based on a random effects model and an estimator based on conditional maximum likelihood. These are presented in a pixel-based model which employs 
a logistic regression with natural cubic splines for the location variable, yielding a smooth intensity function. In addition, sub-sampling techniques are described in order to deal with computational challenges when using such models. 
 Using the exact location of RACs is problematic in practice due to characteristics of the data presented in Section 5. Instead, the shoe is divided into larger areas to which a piece-wise constant intensity function is fitted. Section 6 presents simulation results for the comparison of sub-sampling case-control techniques as well as a comparison of the piecewise constant estimators in different settings. Section 7 concludes the paper with a discussion.


\section{Data}\label{data}

The Israeli Police Division of Identification and Forensic Science (DIFS) has amassed one of the most comprehensive RAC databases, including some 386 lab prints and 13,000 RACs (Wiesner et al., 2019). An important initial preprocessing step was to normalize all shoe impressions to a standardized X-Y axis with identical length and orientation.
This was done by first marking a shoe-aligned coordinate system on each print and then standardizing the shoe according to this system: for each lab print, the top and bottom of the shoeprint were marked to indicate the direction of the major axis and to determine the length of the shoe. The axes' origin was set at the middle point between the two marked extremities. The minor axis was defined as the line perpendicular to the major axis that passes through the origin of axes. The standardization was done by transforming all measurements from image coordinates to the shoe aligned coordinate system as follows (for more details, see Wiesner et al., 2019):
\begin{enumerate}
	\item[a)] Translation of the marked origin of axes to (0,0).
	\item[b)] Rotation by the direction of the shoe aligned coordinate system.
 \item[c)] Scaling by the length of the shoeprint.
  \item[d)] Multiplying the x-value of the points (x being the horizontal axis) by -1 or 1, to mirror if needed such that all shoeprints will be turned to left shoes.
\end{enumerate}
RACs are assumed to have a two-dimensional shape. This study focuses on the location, measured as a point $(x,y)\in \mathbb{R}^2$, which is the center of gravity calculated as the mean of all pixels included in the RAC.

The number of RACs per shoe varies between 1 and 190 with an average of 34 (see 
Figure~1 in Supporting web materials~2)
 except for one shoe that has an unusually high number of RACs (309). The RACs were marked by different examiners who were supervised by forensic experts. RACs can be observed only on the contact surface, a feature which varies from one shoe to the other. This should be taken into account in the analysis as described in Section~\ref{model}. 
 The number of pixels with contact surface per shoe varies between 3631 and 19199 (see Figure~2 in Supporting web materials~2). It is also shown 
 that the pad of the shoe and the 4 circles at the heal more frequently contain a contact surface (see Figure~3 in Supporting web materials~2 for the cumulative contact surface of all shoes). There is a weak correlation between the number of pixels with contact surface per shoe and the number of RACs per shoe -- the Spearman correlation coefficient is equal to 0.116; see Figure~4 in Supporting web materials~2.


\section{Model} \label{model}
Consider $m$ independent shoes with different levels of wear and tear. It is assumed that the locations of RACs on each shoe follow a non-homogeneous Poisson process. Specifically, let $D\subset \mathbb{R}^2$ be the region representing the surface of a generic shoe, and let $B_i\subset D$ be the contact surface of shoe $i$, $i=1,\dots,m$. For any $A\subset D$, denote by $N_i(A)$ the number of RACs appearing on subset $A$ of shoe $i$. Note that $N_i (A)$ represents all of the RACs: those on the contact surface and those that are not, but only RACs on the contact surface are observed. $N_i$ is assumed to be a non-homogeneous Poisson point process on shoe $i$ with intensity function $\lambda^{(i)}$ and corresponding cumulative intensity function $\Lambda^{(i)}(A)=\int_{(x,y)\in A} \lambda^{(i)}(x,y)\,dx\,dy$. Let $a_i \in \mathbb{R}^+$ be a random variable that indicates the degree of wear and tear of shoe $i$, such that $a_i$, $i=1\dots,m$, are $iid$ with $\E(a_i)=1$ and $\var(a_i)=\sigma^2$. Our basic assumption is that
\begin{equation}\label{assump1}
\lambda^{(i)}(x,y)=\lambda^{(0)}(x,y)\cdot a_i, ~~(x,y)\in D,
\end{equation}
where $\lambda^{(0)}$ is joint to all shoes.
Thus, it is assumed that all shoes have the same shape of the intensity function and only differ by the shoe-specific parameter $a_i$ that determines the height of the function. An equivalent assumption to \eqref{assump1} is $\Lambda^{(i)}(A)=\Lambda^{(0)}(A)\cdot a_i, ~~A\subseteq D$, where  $\Lambda^{(0)}(A)=\int_{(x,y)\in A} \lambda^{(0)}(x,y)\,dx\,dy$.

The main goal is to estimate the baseline intensity function $\lambda^{(0)}$, which is, in general, a continuous function. To simplify the analysis while preserving the sole type for the observer, images were reduced in resolution from $7000 \times 3000$  to $397\times 307$ pixels.  Thus, the modeling assumption is that the function is constant within pixels in a high resolution grid. In Section \ref{est_large} we consider a partition into larger areas. In general, region $D$ is partitioned into $J$ subsets $A_1,\ldots,A_J$ such that $\cup _j A_j= D$ and $A_j \cap A_k=\phi$,   $\forall j\neq k$. The assumption is that $\lambda^{(0)}(x,y)=\lambda_j$ for all $(x,y)\in A_j$. For each subset, two characteristics are of interest: the contact surface $A_j\cap B_i$ which is considered fixed and not random, and the number of RACs.

Let $N_{ij}= N_i(A_{j}\cap B_i)$ be the number of observable RACs on shoe $i$ and subset $j$, and let $n_{ij}$ denote its realization. Also, denote by $n_i=\sum_j n_{ij}$, the total number of observed RACs on shoe $i$. All the modeling assumptions above reduce to
\begin{equation}\label{modelass}
N_{ij}|a_i \sim \mbox{Poisson}(\lambda_j S_{ij} a_i),
\end{equation}
where $S _{ij} = |B_i \cap A_j|$ is the area of the contact surface of shoe $i$ in subset $j$. Figure~5 in Supporting web materials~3 summarizes the model schematically on a lab print.

%

\section{Estimation using maximum resolution}
The maximum resolution is achieved when the regions are in fact pixels. In this case, $S_{ij}\in\{0,1\}$ and $n_{ij}\in\{0,1\}$. When RACs are created they may tear the shoe sole such that the location of the RAC appears to be on an area with no contact surface and thus the value of $S_{ij}$ is set to 1 in all cases where $n_{ij}=1$. In addition, the value of $n_{ij}$ is set to 1 in 38 cases where $n_{ij}=2$. Appearance of two RACs in the same pixel may be due to the way the data were pre-processed and the location was defined. More details regarding the complexities in defining the location and in using pixels are addressed in Section~\ref{est_large}. Note that RACs cannot be observed in areas with no contact surface, thus $n_{ij}=0$ whenever $S_{ij}=0$.
The number of pixels 
is relatively large (more than 10,000 per shoe, on average)  compared to the number of RACs (average of approximately 34), meaning that  the event $n_{ij}=1$ is rare.

\subsection{A naive estimator for $\lambda^{(0)}$} \label{intes}
By \eqref{modelass}, if $S_{ij}>0$ then
\begin{equation*}
E\left(\frac{N_{ij}}{S_{ij}}\right)=E\left(\frac{\E(N_{ij}|a_i)}{S_{ij}}\right)\\
=E\left(\frac{\lambda_j S_{ij}a_i}{S_{ij}}\right)\\
=\lambda_j E\left( a_i\right)\\
=\lambda_j.
\end{equation*}
Therefore, a natural (unbiased) estimator for $\lambda_j$ is
\begin{equation}\label{unbiased}
\hat{\lambda}_j=\frac{1}{|m_j|}\sum_{i\in m_j} \frac{n_{ij}}{S_{ij}},
\end{equation}
where $m_j=\{i|S_{ij}>0\}$. As noted above, in case of pixels, $S_{ij}\in \{0,1\}$, $|m_j|=\sum_{i=1}^m S_{ij}$, and the estimator reduces to
\begin{equation*}
\hat{\lambda}_j=\frac{1}{|m_j|}\sum_{i\in m_j} n_{ij}
= \frac{\sum_{i=1}^m n_{ij}}{\sum_{i=1}^m S_{ij}},
\end{equation*}
where the last equality follows from $\{S_{ij}=0\} \Rightarrow \{n_{ij}=0\}$.

For the maximal resolution case, the estimator coincides with the one suggested by Wiesner et al. (2019).
In order to get a smooth estimator for the two-dimensional function $\lambda^{(0)}$, a kernel smoother is applied to the set of estimators $\hat{\lambda}_j$, $j=1,\dots,J$. 
This approach estimates a very large number of parameters separately, ignoring the spatial structure. A possible alternative is to model this structure using smooth functions, as is done next.

\subsection{Estimation of $\lambda^{(0)}$ using a random effects model} \label{Random}
In the binary setting, the occurrence of RACs is approximated by a logistic regression model. 
Specifically, for $S_{ij}=1$, it is assumed that
\begin{equation}\label{logist}
\P(N_{ij}=1 \mid a_i)= e^{g(\beta,x(A_j),y(A_j)) +a_i}/\{1+e^{g(\beta,x(A_j),y(A_j)) +a_i}\}
\end{equation}
where $(x(A_j),y(A_j))$ are the coordinates of pixel $j$ and  $g$ is chosen here to be a product of natural cubic splines $g(\beta,x,y)=g_X(x)g_Y(y)$ (Hastie et al., 2009).
%


A random effects model assumes that $a_i$ ($i=1,\ldots,m$) are independent and identically distributed having a law $H_\theta$ (Myers et al.,~2012~ p.319) indexed by a parameter $\theta\in \Theta$. Here the standard assumption that $H_\theta$ is a normal distribution with zero mean and unknown variance is used. The likelihood reduces to
\begin{equation*}
L(g(\beta,x(A_j),y(A_j)),\theta;n_{ij})=\Pi_{i=1}^m \int\Pi_{j=1}^J \left(\frac{e^{n_{ij}\cdot(g(\beta,x(A_j),y(A_j)) +a_i)}}{1+e^{g(\beta,x(A_j),y(A_j)) +a_i}}\right)^{S_{ij}} h_{\theta}(a)da,
\end{equation*}
where $h_\theta$ is the $N(0,\theta^2)$ density, and the estimators are obtained by maximizing the likelihood with respect to the parameters $\beta$ (of $g$) and $\theta$ (of $h$).
Since the number of pixels is relatively large (there are millions of binary variables having $S_{ij}=1$), computation is challenging and a  sub-sampling technique is used, as described in Section \ref{sampl}.

\subsection{Estimation of $\lambda^{(0)}$ using conditional maximum likelihood} \label{CML}
Instead of modeling the distribution of the shoe-specific parameter $a_i$, it can be treated as a nuisance parameter and be eliminated by conditioning on its sufficient statistic (Bishop et al., 2007; Agresti, 2013), leading to a conditional maximum likelihood (CML) approach.

The sufficient statistic of $a_i$ is  $n_i=\sum_j n_{ij}$, and the resulting conditional likelihood is,
\begin{equation*}
\frac{e^{\sum_{j=1}^J n_{\cdot j}\cdot g(\beta,x(A_j),y(A_j))}}{\Pi_{i=1}^m \sum_{u|n_{i}}  e^{\sum_{j=1}^J u_{j} \cdot g(\beta,x(A_j),y(A_j))}},
\end{equation*}
where $ n_{\cdot j}=\sum_i n_{ij}$  and $u|n_i$ indicates summation over all $u=(u_{1},\ldots,u_{J})$ such that $\sum_{j=1}^J u_{j}=n_i$.
This sum includes ${|B_i|}\choose{n_i}$ $ \approx$ $ {|B_i|}\choose{34}$ elements, where $|B_i|$ is very large (about $10,000$); see 
Figure~2 in Supporting web materials~2. This causes computational challenges and a case-control sub-sampling technique is used as described in the next section.

\subsection{Sub-sampling techniques} \label{sampl}

Estimating the intensity function at a high resolution is computationally challenging since the average number of pixels with contact surface per shoe is about 10,570, while the average number of RACs per shoe is around 34.  A possible approach is to use random sub-sampling for inference and specifically to employ case-control sub-sampling techniques, as recently suggested by Wright et al. (2017); (see also Fithian and Hastie, 2014).

In logistic regression, the estimated effect is consistent when using case-control sampling, but the estimated intercept may not be valid (Agresti, 2013). As we are mainly interested in $\lambda_0$, the intercepts are of a secondary importance, though they can be readily estimated as described by Wright et al. (2017).
The implication of random sub-sampling from the original data is as follows. Let $Z_{ij}$ indicate whether pixel $A_j$ of shoe $i$ is sampled ($1=$yes, $0=$no) and let $\rho^{(1)}_{i}=\P(Z_{ij}=1|N_{ij}=1)$ and $\rho^{(0)}_{i}=\P(Z_{ij}=1|N_{ij}=0)$ be the (possibly shoe-dependent) sub-sampling probabilities of cases and controls, respectively. It follows from Bayes' theorem that,
\begin{equation} \label{bays_ran}
\P(N_{ij}=1|Z_{ij}=1)=\frac{\rho^{(1)}_{i} \P(N_{ij}=1)}{\sum_{k=0}^{1} \rho^{(k)}_{i} \P(N_{ij}=k) }.
\end{equation}
\subsubsection{Case-control sub-sampling in a random effects model}\label{samp_rand}
Using \eqref{logist}, Equation~\eqref{bays_ran} simplifies to

\begin{equation*} \label{bays_ran1}
\P(N_{ij}=1|Z_{ij}=1)=\frac{\rho^{(1)}_{i} \exp(g(\beta,x(A_j),y(A_j)) +a_i)}{\rho^{(0)}_{i}+\rho^{(1)}_{i} (\exp(g(\beta,x(A_j),y(A_j)) +a_i) }
= \frac{ \exp(g^*(\beta,x(A_j),y(A_j)) +a_i)}{1+ \exp(g^*(\beta,x(A_j),y(A_j)) +a_i) },
\end{equation*}
where $g^*(\beta,x(A_j),y(A_j))=\log(\rho^{(1)}_{i}/\rho^{(0)}_{i})+g(\beta,x(A_j),y(A_j))$.


Thus, the underlying model of the random sub-sample is identical to the original model except for the intercept. Specifically, the location effect parameters are the same, and the intercept estimator is biased by a factor of $\log(\rho^{(1)}_{i}/\rho^{(0)}_{i})$. The latter can be adjusted with the inclusion of simple offset terms reflecting shoe-specific sampling probabilities (see Wright et al., 2017). Note that in the case of identical sampling probabilities for all shoes, the intercept can be adjusted after the estimation.

\subsubsection{Case-control sub-sampling in a CML estimation}\label{samp_CML}
Following similar steps as in the random effects case (see also Agresti, 2013, p.168)  it can be shown that the likelihood under case-control sub-sampling and the original likelihood differ only by the intercept. Since the intercept term does not appear in the conditional likelihood, the CML under both scenarios yields estimators for the same location effect parameters.


\subsection{Application to shoe data}\label{imple}

\subsubsection{Estimation of the baseline intensity function}
Figure~\ref{fig:comp_est} presents the three estimators applied to the shoe data.  The naive estimator of Section \ref{intes} was smoothed using the \textit{kernel2dsmooth} function in the R package smoothie (Gilleland, 2003). A uniform kernel is used where each entry of the smoothed matrix is calculated as the average of its $21^2$ neighbor entries in the original matrix.

The random effects and the CML estimates were calculated using a product of natural cubic splines. Three knots for the X-axis and five knots for the Y-axis were used and their positions were set according to equal quantiles. These numbers of knots enabled flexibility and still avoided computational problems. 

The calculations were based on within-cluster case-control sub-sampling that, as shown in section~\ref{comp_samp}, seems to outperform all other methods. From each shoe, the analysis includes all cases (pixels with RACs, $n_{ij}=1$) and a random sample of controls (pixels without RACs, $n_{ij}=0$) that is proportional to the number of cases in the shoe (20 controls for each case). If the number of controls on the shoe is less than required, all controls are used.

As is evident from Figure~\ref{fig:comp_est}, the CML and random effects estimate brought about similar results which were relatively close to the naive estimate (but much smoother). The estimated intensity function is highest at the ball and heel of the foot.
In addition, using the random effects model, the global test of all coefficients equal to zero (with the exception of the intercept) was conducted in order to check the constant intensity function assumption. Although the test was highly significant  in rejecting the constant intensity assumption ($p-value\approx 0$), the maximum estimated intensity value is about twice that of the minimum value; meaning that it is not far from a uniform intensity function. Thus, the probability of finding a RAC is relatively similar across the entire shoe sole, and there is no area in which observing a RAC increases dramatically the evidential value against a suspect.

\begin{figure}[tb]
	\begin{center}
		\includegraphics[width=0.8\textwidth]{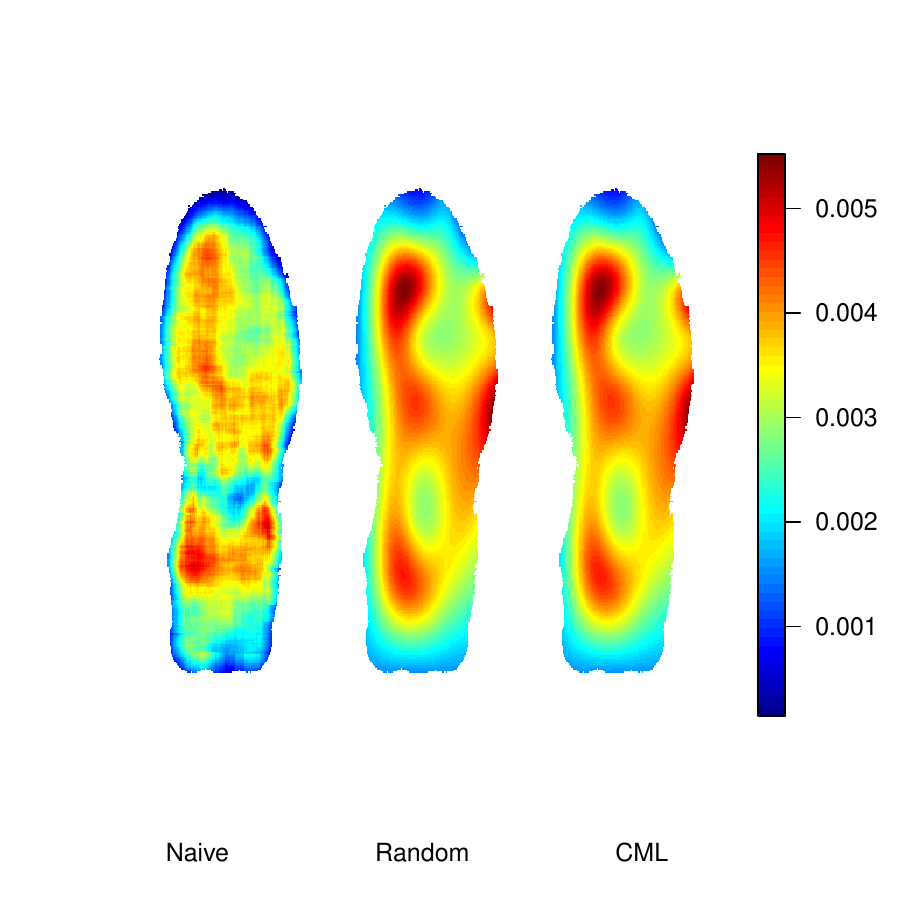}
		\\
	\end{center}
	\caption{Comparison of the three estimates of the intensity function}
	\label{fig:comp_est}
\end{figure}

\subsubsection{Confidence intervals}

To measure the uncertainty in the estimated intensity function, pointwise confidence intervals were calculated under the postulated model. A moment type estimator for the variance under the naive approach is given in Supporting web materials~4.
In the CML case, the asymptotic covariance matrix of $\hat{\beta}$ can be estimated using the observed information matrix based on conditional likelihood evaluated at $\hat{\beta}$ (Sartori and Severini, 2004). The covariance matrix was estimated using the \textit{clogit} function under the survival package (Therneau, 2015) in \R. In the random effects case, the same estimation approach of employing the observed information evaluated at $\hat{\beta}$ is performed using the random effects likelihood. The \textit{glmer} function under the lmr4 package (Bates et. al., 2015) in \R~was employed. Using these covariance matrices, confidence intervals of the intensity function were calculated. Figure~\ref{fig:comp_CI} presents the resulting 95\% pointwise confidence intervals of the estimators using random effects and CML models, in three chosen locations on the Y axis. The confidence intervals based on the naive estimator (not shown) are much wider as the result of the local estimation approach. The random and CML confidence intervals are relatively close and in most locations the CML is slightly wider. The deviation from the uniform model is quite clear. 
\begin{figure}[tb]
	\begin{center}
		\includegraphics[width=1.2\textwidth]{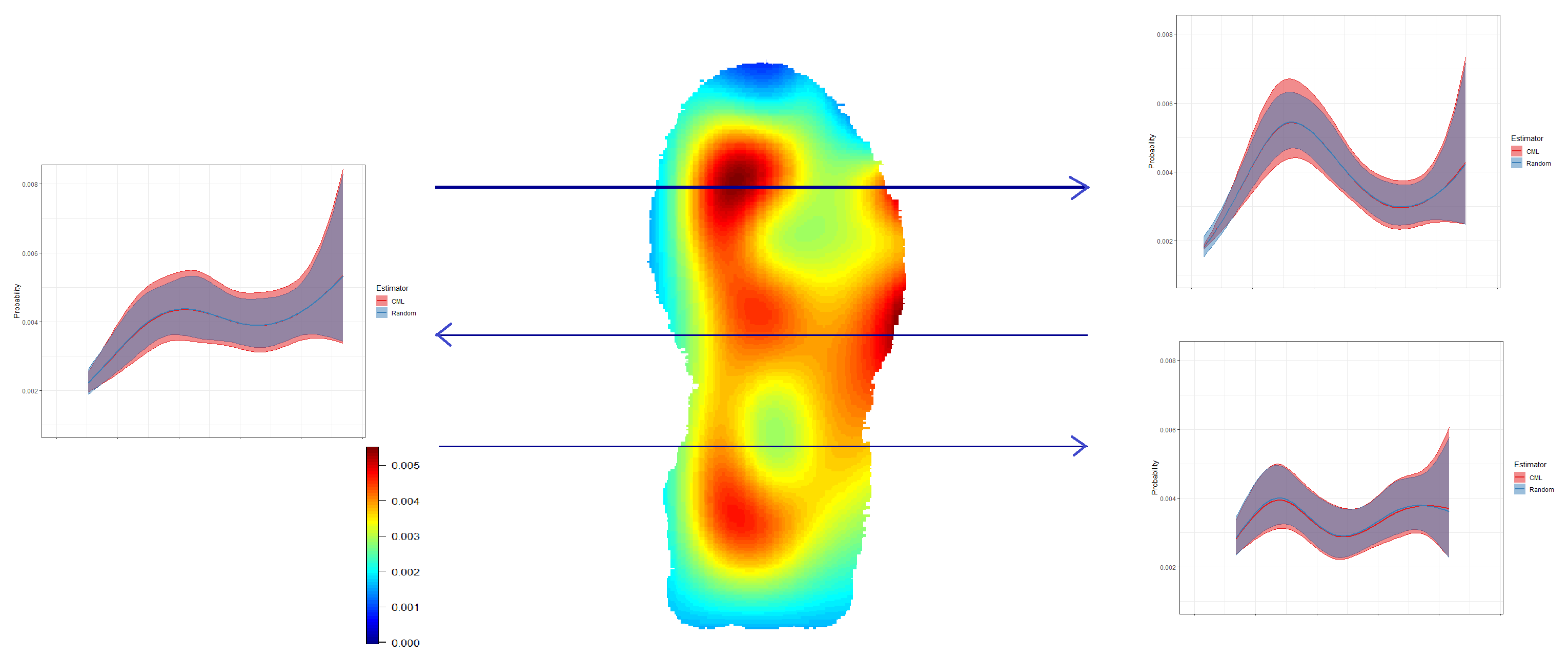}
		\\
	\end{center}
	\caption{Comparison of random effects and CML 95\% pointwise confidence intervals in three chosen locations}
	\label{fig:comp_CI}
\end{figure}


\section{Estimation using larger areas} \label{est_large}
\label{piece}
The estimates presented in the previous section are local and therefore heavily rely on valid location data. However, the definition of location is problematic for at least two reasons. First, a RAC is not a point in two dimensions but a set of points, which is marked by trained experts. The marking process is somewhat subjective and is exposed to marking errors, therefore the RACs' centers are prone to inaccuracies. Second, different shoes have different shapes, and it is not clear if they can be appropriately normalized. Here the shoes were normalized according to the Y axis, which is the standard measure of a size of a shoe, but the X axes of different shoes vary. This means that a RAC having a certain X-coordinate can appear near the middle in one shoe and near the edge in another.


Thus, ``locations'' of RAC's are more regional than local, and in order to overcome this, pixels should be grouped to larger  subsets. The question of how to divide the shoe and determine these subsets remains. As noted, these sets should be large enough in order to minimize the errors resulting from the normalization problem, especially on the X axis, but also should reflect the differences in the intensity function in different areas, which are mostly attributed to walking patterns. Based on the expertise of the authors from the police laboratory, it was decided that the Y axis of the shoe sole would be divided into 5 layers, the X axis would be divided into 2 layers, and the upper part of the shoe sole that comes in contact with the pad of the foot would be divided into an outer and inner part as they are expected to behave differently. Figure~\ref{fig:sub_exp} presents the resulting 14 areas that are believed to be quite homogeneous, but to have different probabilities of observing RACs. This partition is used in the following analysis.
\begin{figure}[!h]
	\begin{center}
		\includegraphics[width=0.5\textwidth]{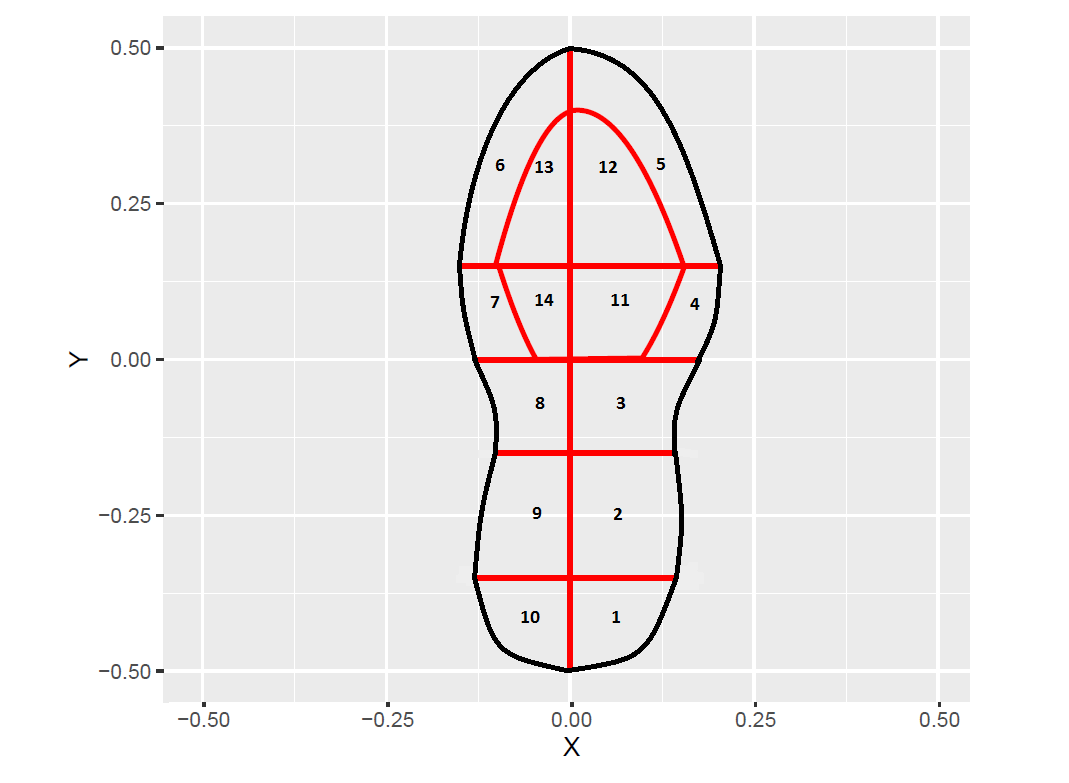}
		\\
	\end{center}
	\caption{Subsets of the shoe obtained according to expert knowledge}
	\label{fig:sub_exp}
\end{figure}

%

The number of RACs in each area follows the Poisson distribution given in \eqref{modelass}.
The naive approach results in the estimator \eqref{unbiased}. However, since the areas of subsets, $S_{ij}$, are not restricted to 0 or 1, this estimator does not coincide with the estimator presented in Wiesner et al. (2019).
%
%

For the random effects approach, the likelihood is
\begin{equation}
\label{ran_piece}
L(\lambda_1,\ldots,\lambda_J,\theta;n_{ij})=\Pi_{i=1}^m\int\Pi_{j=1}^J\frac{e^{-\lambda_{j}\cdot a \cdot S_{ij}} (\lambda_{j}\cdot a \cdot S_{ij})^{n_{ij}}}{n_{ij}!} h_{\theta}(a)da
\end{equation}
The random effects estimators are obtained by maximizing the likelihood with respect to  $\lambda_1,\ldots,\lambda_J$ and the parameters $\theta$ of $h_{\theta}(a)$.

The conditional maximum likelihood is obtained by conditioning on the $a_i$'s sufficient statistic, $N_i$. Since $N_{i1},...,N_{iJ}$ are independent, $N_{ij}|a_i\sim \mbox{Poisson}(\lambda_j a_i S _{ij})$ and $N_i|a_i\sim \mbox{Poisson}(\sum_j \lambda_j a_i S _{ij})$,
\begin{equation*} \label{cml_piece1_cont}
N_{i1}....N_{iJ}|N_i=n_i \sim \mbox{Multinomial}\left(n_i, \frac{e^{\log(\lambda_{j})+\log(S _{ij})}}{\sum_{j'} e^{\log(\lambda_{j'})+\log(S _{ij'})}}\right).
\end{equation*}
Thus, the log likelihood (up to a constant) is,
\begin{equation} \label{c}
\ell(\lambda_1,\dots,\lambda_J)=\sum_{i=1} ^m \sum_{j=1}^{J}n_{ij}\left[\log(\lambda_j)+\log(S _{ij})-\log\left(\sum_{j'} S _{ij'}\lambda_{j'}\right)\right].
\end{equation}
Since  $\ell(c\lambda_1,\dots,c\lambda_J)= \ell(\lambda_1,\dots,\lambda_J)$ for all $c>0$, the vector of parameters $(\lambda_1,\dots,\lambda_J)$ can be  estimated only up to a multiplicative constant. We therefore restrict $\lambda_1=1$, and the CML approach reduces to solving the following simple set of equations for  $\lambda_2,\ldots,\lambda_J$,
\begin{equation}
\label{cml_piece}
\frac{\partial \ell}{\partial \lambda_{k}} =\sum_{i=1}^m\left[\frac{n_{ik}}{\lambda_k}-\frac{n_i}{\sum_{j'} S_{ij'}\lambda_{j'} }S_{ik}\right]=0,
\end{equation}
which can be carried out numerically. This method is restricted to a relatively small number of regions, as the number of parameters $J$ must be small relative to the number of shoes $m$.

\begin{remark} \label{rem1}
Due to the identifiability issue discussed above, $\lambda_1,\ldots,\lambda_J$ cannot be fully estimated by using the conditional likelihood, which should be taken into account when comparing different methods. A possible approach is to scale the CML estimates  by equating their average to that of the estimated parameters under the naive approach \eqref{unbiased}. In this case, there is an added source of variance due to the variance of the naive estimators. This issue requires further research and is not addressed in the current study.	
\end{remark}

\subsection{The case of a single shoe model}

\begin{proposition} \label{samemod}
	
	If $S_{ij}=S_{j}, \forall i,j$, then $$\frac{\hat{\lambda}_j}{\hat{\lambda}_k}=\frac{n_{.j}}{n_{.k}}\cdot \frac{S_{k}}{S_{j}}$$ for any $k,j$ in all three estimators.
	
\end{proposition}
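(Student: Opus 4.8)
The plan is to handle the three estimators in turn, exploiting a feature they share: under the hypothesis $S_{ij}=S_j$ the estimating equation for each $\lambda_k$ decouples into a term proportional to $n_{\cdot k}/\lambda_k$ and a term equal to $S_k$ times a quantity that does not depend on $k$. Setting the equation to zero then forces $\hat\lambda_k\propto n_{\cdot k}/S_k$, from which the claimed ratio $\hat\lambda_j/\hat\lambda_k=(n_{\cdot j}/n_{\cdot k})(S_k/S_j)$ is immediate.

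For the naive estimator this is essentially a substitution. When $S_{ij}=S_j>0$ the index set $m_j$ in \eqref{unbiased} is all of $\{1,\dots,m\}$, so $\hat\lambda_j=\frac{1}{m}\sum_i n_{ij}/S_j=n_{\cdot j}/(mS_j)$, and dividing the expressions for $j$ and $k$ gives the result. For the CML estimator I would start from the score equations \eqref{cml_piece}. Substituting $S_{ij}=S_j$ makes the inner sum $\sum_{j'}S_{ij'}\lambda_{j'}=\sum_{j'}S_{j'}\lambda_{j'}$ independent of $i$; calling this constant $C$ and writing $n_{\cdot\cdot}=\sum_i n_i$, equation \eqref{cml_piece} becomes $n_{\cdot k}/\lambda_k-(S_k/C)\,n_{\cdot\cdot}=0$, so $\hat\lambda_k=n_{\cdot k}C/(S_k n_{\cdot\cdot})$ and the factors $C$ and $n_{\cdot\cdot}$ cancel in the ratio. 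Self-consistency can be checked by summing $S_k\hat\lambda_k$ over $k$ to recover $C$, and the normalisation $\lambda_1=1$ of Remark~\ref{rem1} only fixes the common scale, leaving the ratio unchanged.

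The random effects estimator is the step I expect to require the most care, because $\lambda_k$ enters the likelihood \eqref{ran_piece} both through the explicit factor $(\lambda_k S_k)^{n_{ik}}$ and through the exponent inside the integral. The key observation is that under $S_{ij}=S_j$ the per-shoe integrand factors as $\bigl(\prod_j(\lambda_j S_j)^{n_{ij}}/n_{ij}!\bigr)\,a^{n_i}e^{-a\mu}$ with $\mu=\sum_j\lambda_j S_j$ and $n_i=\sum_j n_{ij}$, so the $i$-th likelihood contribution is $\bigl(\prod_j(\lambda_j S_j)^{n_{ij}}/n_{ij}!\bigr)M_i(\mu)$, where $M_i(\mu)=\int a^{n_i}e^{-a\mu}h_\theta(a)\,da$ collects all of the integral's dependence on the parameters. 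Differentiating $\log L$ with respect to $\lambda_k$ (justifying differentiation under the integral by dominated convergence) yields $\partial\log L/\partial\lambda_k=n_{\cdot k}/\lambda_k-S_k W=0$, where $W=-\sum_i M_i'(\mu)/M_i(\mu)=\sum_i\int a^{n_i+1}e^{-a\mu}h_\theta(a)\,da\big/\int a^{n_i}e^{-a\mu}h_\theta(a)\,da$ is a sum of posterior means of the $a_i$ and, crucially, is independent of $k$.

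Thus $\hat\lambda_k=n_{\cdot k}/(S_k W)$, and since $W$ is evaluated at the common maximiser $(\hat\mu,\hat\theta)$ it is shared across all $k$ and cancels in the ratio, regardless of the value of $\hat\theta$ (which is pinned down separately by the score equation in $\theta$). The main obstacle, then, is recognising the factorisation that isolates the $k$-dependence in the random effects likelihood; once that is in hand, all three cases collapse to the same proportionality $\hat\lambda_k\propto n_{\cdot k}/S_k$, and the common constant of proportionality cancels to give the stated identity.
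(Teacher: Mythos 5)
Your proposal is correct and follows essentially the same route as the paper's own proof: direct substitution for the naive estimator, the score equations \eqref{cml_piece} for CML, and for the random effects case the same factorisation of the per-shoe integrand so that the score becomes $n_{\cdot k}/\lambda_k - S_k\cdot(\text{a }k\text{-independent sum of posterior means of }a_i)$; your $W$ is exactly the paper's $\sum_i c_i$ with $c_i=\E(a\,f_i(a))/\E(f_i(a))$. No gaps.
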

Proposition~\ref{samemod} refers to the simple case of all shoes having the same model, meaning that they have the same contact surface. Actually, it is true even if shoes have different models but have the same amount of contact surface in the different areas.
The Proposition states that the three methods differ only when the contact surface differs, 
and it holds for any partition of the shoe sole. The proof is deferred to Supporting web materials~5.

\subsection{Data analysis}
The results of the three estimators (naive, random and CML) applied to large areas are presented in Figure~\ref{fig:sub_areas}. In order to calculate the estimator in the random effects case, the \textit{hglm} function under the hglm package (R{\''o}nneg{\aa}rd, 2010) in \R~is used, where $a_i$ are $iid$  Gamma random variables.
 The two other estimators were implemented using a self written code. All three estimates agree on the areas with high and low intensity. The results are consistent with the previous analysis as the intensity function is the highest at the ball and heel of the foot. Here again the differences between the maximum and minimum estimated intensities are by a factor of 2.

In addition, 95\% confidence intervals based on the three aproaches were calculated. The interval of the naive estimator was calculated using the normal approximation with variance estimated as described in Section 4 of the Supporting web materials.  For the random effects estimator, the \textit{hglm} function under the hglm package (R{\''o}nneg{\aa}rd, 2010) was used to calculated the variance. The variance of the CML estimator is based on the observed information matrix.
The confidence intervals are presented in Table~\ref{table:1}. 

\begin{figure}[tb]
	\begin{center}
		\includegraphics[width=0.5\textwidth]{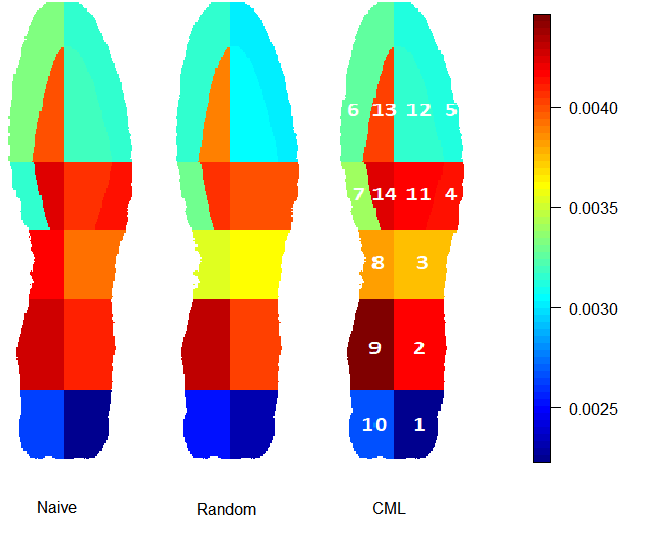}
	\end{center}
	\caption{Results of the estimators based on the piece-wise constant model.}
	\label{fig:sub_areas}
\end{figure}

\begin{table}[!h]
	\resizebox{10cm}{!}{
	\centering
	\begin{tabular}{ |p{3cm}|p{3cm}|p{3cm}|p{3cm}|  }
		\hline
		
		\hline
		Sub area& Naive & Random & CML\\
		\hline
		 1 & (0.002,0.0026) & (0.002,0.0026) & (0.0021,0.0024) \\
	2 & (0.0036,0.0046) & (0.0036,0.0045) & (0.004,0.0044) \\
	3 & (0.0033,0.0047) & (0.0031,0.0042) & (0.0035,0.0041) \\
	4 & (0.0036,0.0047) & (0.0035,0.0046) & (0.0038,0.0045) \\
	5 & (0.0028,0.0035) & (0.0027,0.0034) & (0.003,0.0033) \\
	6 & (0.003,0.0037) & (0.0028,0.0035) & (0.0031,0.0034) \\
	7 & (0.0026,0.0038) & (0.0028,0.0039) & (0.0031,0.0037) \\
	8 & (0.003,0.0057) & (0.0029,0.0043) & (0.0034,0.0043) \\
	9 & (0.0038,0.0048) & (0.0038,0.0048) & (0.0042,0.0047) \\
	10 & (0.0023,0.003) & (0.0022,0.0029) & (0.0025,0.0028) \\
	11 & (0.0036,0.0046) & (0.0035,0.0045) & (0.0039,0.0044) \\
	12 & (0.0028,0.0036) & (0.0027,0.0034) & (0.003,0.0033) \\
	13 & (0.0035,0.0045) & (0.0034,0.0044) & (0.0038,0.0043) \\
	14 & (0.0037,0.0049) & (0.0035,0.0047) & (0.0039,0.0046) \\
		
		\hline
	\end{tabular}
}\\
	\caption{Confidence intervals for the estimators based on the piece-wise constant model. The number of the sub-area indicates the area marked in Figure~\ref{fig:sub_areas}.}
	\label{table:1}
\end{table}

The confidence intervals of the three approaches are relatively close. The confidence interval of the CML approach is narrower as a result of the estimators' lower variance due to conditioning on the $n_i$'s and treating the scaling factor as a constant (see the discussion in Remark~\ref{rem1}).
The estimators agree on the areas with relatively wide and narrow intervals. The widest interval is of area 8 of the shoe, which is characterized by a low amount of contact surface (see Figure~3 in Supporting web materials~2).
In addition, using the random effects model, the hypothesis that the $\lambda_j$ parameters are equal for all $j$, meaning that the intensity is uniform over the whole shoe, is rejected with a $p-value\approx 0$.

%
%
%
%

\section{Simulation}\label{sim}
\subsection{Comparison of sub-sampling case-control techniques}\label{comp_samp}

Simulations are carried out to compare the random effects and CML estimators using different types of within-cluster case-control sub-sampling and sub-sampling across the whole data frame. Specifically, results based on the entire sample are compared to (i) random sub-sampling, (ii) case-control sub-sampling without taking into account the clusters, (iii) within-cluster case-control sub-sampling where controls  are sampled from each cluster proportional to the cluster size, and (iv) within-cluster case-control sub-sampling where controls  are sampled from each cluster  proportional to the number of cases in the cluster. In all scenarios, all cases are included in the sub-sample.

The full simulated data included 500 clusters with 500 observations in each. Events were drawn according to the logistic model:
\begin{equation*}\label{logist_sim}
\P(N_{ij}=1|a_i)= e^{\beta_0 +\beta_1\cdot x_{ij} + \beta_2\cdot x^2_{ij} +a_i}/(1+e^{\beta_0 +\beta_1\cdot x_{ij} + \beta_2\cdot x^2_{ij} +a_i}),
\end{equation*}
where $x_{ij}$ was evenly spaced between 0 and 1 (imitating a one dimensional location), and $\beta_0=-3, \beta_1=2, \beta_2=-2$.
The distribution of $a_i$ was $N(0,0.75^2)$.

The comparison was made based on 200 replications for each sampling technique. Figure~\ref{fig:coefsamp} 
shows the MSE of $\beta_1$ and $\beta_2$, where each sampling technique is marked with roman numerals presented above (0 indicates full data analysis without sub-sampling), ``r'' indicates that the results are based on the random effects estimator and  ``c'' indicates that they are based on the ``CML'' estimator. 

\begin{figure}[tb]
	
	\begin{tabular}{ll}
		
		\includegraphics[width=.6\textwidth]{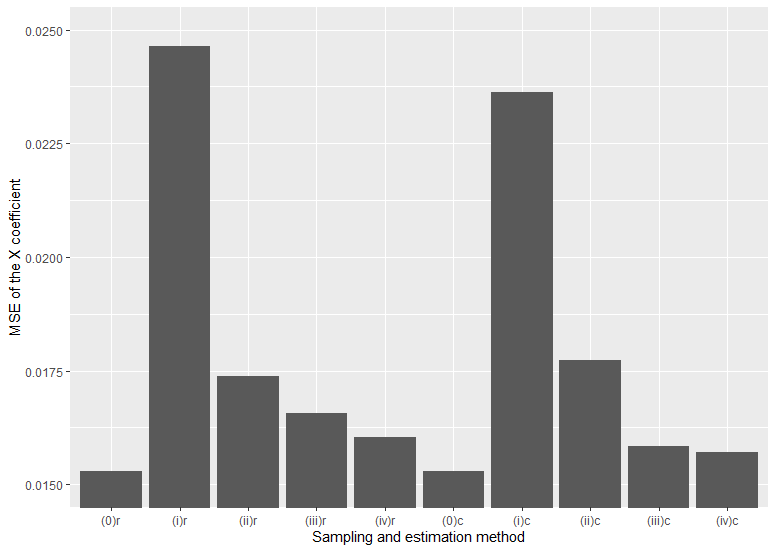}&
		\includegraphics[width=.6\textwidth]{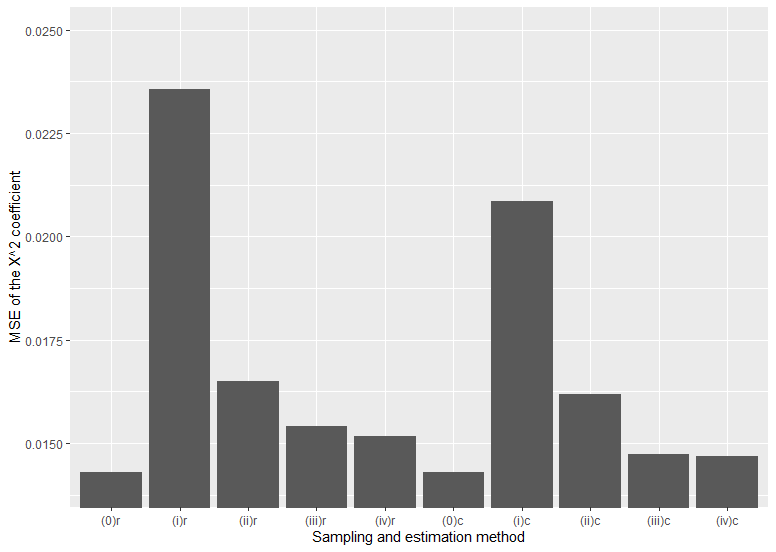}
	\end{tabular}
	
	\caption{MSE of $\beta_1$ (on the left) and $\beta_2$ (on the right) using different sub-sampling techniques.}
	\label{fig:coefsamp}
\end{figure}



Method 4 of within-cluster case-control sub-sampling where controls  are sampled from each cluster proportional to the number of cases in the cluster, 
performs best in terms of MSE. This is true for both the random effects and CML estimators, and for $\beta_1$ and $\beta_2$. In addition it can be seen that 
the random effects and CML estimators are relatively close in their MSE.

\subsection{Comparison of the three estimators}

A second simulation study compares the three estimators based on parameters from the shoe data, assuming 
that $N_{ij}|a_i\sim \mbox{Poisson}(a_i\lambda_j S_{ij})$. The first simulation uses the estimates for the $\lambda$'s obtained from the RACs' database by applying the naive approach and the observed contact surfaces, $S_{ij}$ in the  $J=14$ sub areas. Thus, the number of shoes in the simulation is 386, the same as in the original dataset. In each replication of the simulation, the  $a_i$ are simulated from a $\gamma(\theta,\theta)$ distribution, where $\theta=0.908$ is the estimate of $1/{\rm Var}(a)$ based on the data (see Supportig web materials 4). The results are based on 500 replications.

Figure~\ref{fig:lam1} 
presents the relative bias, which is the empirical bias of the estimator divided by the real parameters' value and the ratio between the estimators' MSE and the theoretical variance of the naive estimator (see Supporting web materials~4). The naive estimator is unbiased, and the simulation suggests that the bias of the two other estimators is negligible in this setting.  
The random effects estimator has the lowest relative bias and MSE across all parameters.  In addition, the naive estimators' MSE ratio is around 1 and since it is unbiased, this indicates that its theoretical variance is close to the empirical variance.

\begin{figure}[tb]
	\begin{tabular}{ll}
		
		\includegraphics[width=.6\textwidth]{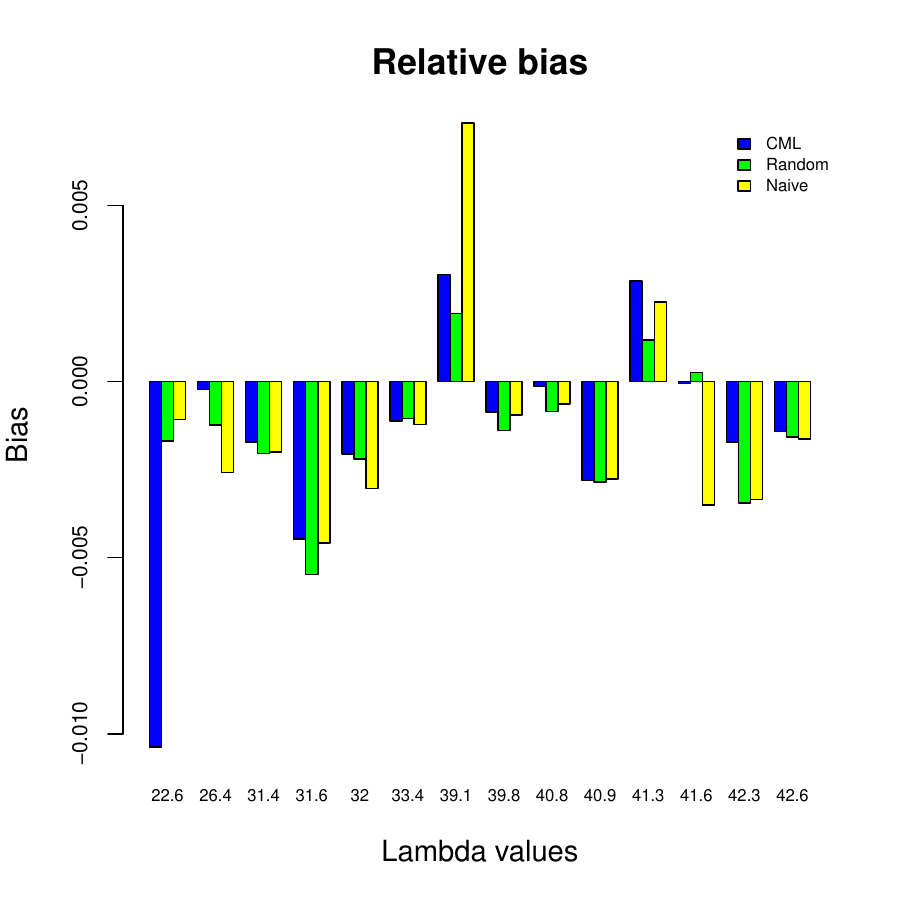}&
		\includegraphics[width=.6\textwidth]{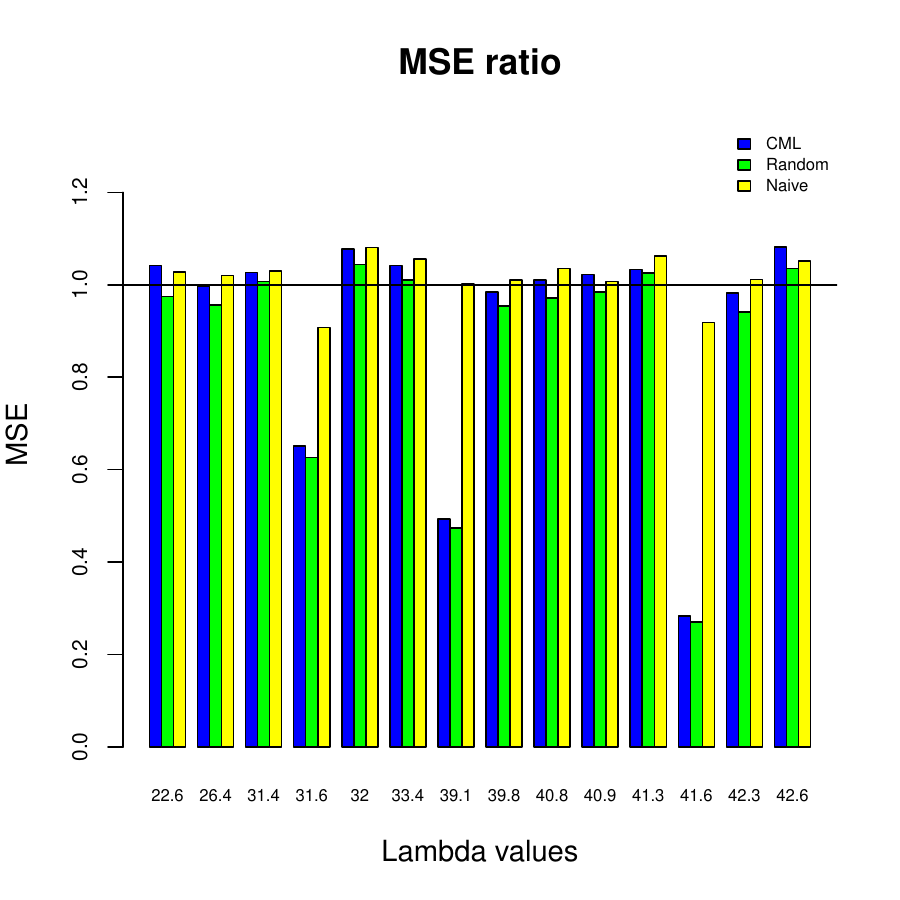}
	\end{tabular}
	\caption{The relative bias and MSE ratio}
	\label{fig:lam1}
\end{figure}


\flushleft{Additional simulations which investigate the effect of $\lambda$, the effect of the number of  sub areas, the effect of the sample size, the number of shoes and the effect of $a_i$ are conducted. 
	The specific settings and the results appear in an unpublished thesis (Kaplan-Damary, 2019). 	In summary, the random effects estimator is found to be the best among the estimators in most settings, with good performance in all. The CML estimator is very close to it.}

\section{Discussion}
\setlength{\parindent}{15pt}

The CML and random effects estimators produced very similar results, which were moreover relatively close to the naive estimator. The simulations suggest that, among the three estimators, the random effects perform best and thus may be preferred. Using this model, the hypothesis of a uniform intensity function is rejected, and the maximum estimated intensity value is approximately twice that of the minimum value.
The estimated intensity function is highest at the ball and heel of the foot. The deviation from uniformity is a result of the morphology of the foot and the areas of the foot that cause pressure on the shoe. This  fits the shape of the estimated intensity function presented here.

The findings of this study take us a step forward in assessing the potential evidential value of footprints.
It is clear that RACs in certain locations are rarer than in others, and these carry more weight in determining the rarity of a given shoe sole. Other characteristics of the RAC, such as size and shape, can be studied to further improve the classification of RACs as common or rare, but these are prone to severe measurement errors that must be modelled.  Additional challenges confront the researcher when analyzing crime scene prints, which are complicated by noise of various forms. Further research on actual crime scene RACs and their comparison to lab prints is therefore of vital importance.

The estimation of the intensity function has been made under the assumption of independence among RACs.
However, as recently shown by Kaplan Damary et al. (2018), this assumption is unjustified. This does not invalidate the findings, as using an independence working assumption results in consistent estimators, but the variance estimator may be somewhat biased. Using larger areas as building blocks for the model, may solve part of the local dependence problem, but further study is needed to understand the importance of the assumption.






\bibliography{bibfile}
\bibliographystyle{authordate1}
\nocite{*}

\newpage 
\appendix

\begin{center}
{\huge \textbf{Web-based supporting materials for "Spacial modeling for the probability of accidental mark locations on a shoe sole" }}
\end{center}

	\section{The process of evaluating shoeprints}\label{eval_shoe}
Here the process of documenting and evaluating shoeprints used in the Israel National Police
Division of Identification and Forensic Science (DIFS) is described.
\begin{enumerate}
	\item CSIs (Crime Scene Investigators) who arrive at the crime scene search for locations where the perpetrators have likely left their shoe prints, preferably an area left untouched by other inhabitants. Visible shoe prints are identified and the area is then darkened while investigators search for shoe prints using oblique light (SWGTREAD,2005a)
	\item After locating a shoeprint, the CSI places an L-shaped photography scale next to it, illuminates it at an angle that reveals as much detail as possible and photographs it with the camera positioned on a tripod directly over the print (SWGTREAD,2006)
	\item If possible, the surface on which the print appears is wrapped in paper and sent to the crime lab for further processing (SWGTREAD,2005b).
	\item	If this is not possible, prints are lifted using one of the following methods:
	\begin{enumerate}
		\item [a.] Two dimensional print methods (SWGTREAD,2007b and Manual for BVDA lifters); white adhesive lifter (most common method used in this lab for 2D shoeprints), black gelatin lifter and electrostatic lifter.
		\item[b.] Three dimensional Prints are casted using dental stone (SWGTREAD,2007a and Cohen et al., 2011).
	\end{enumerate}
	\item Once a suspect is apprehended, the suspect's shoes are sent to the crime lab for comparison with the shoeprints. Usually the shoes will be removed from the suspect's feet, but occasionally, the suspect's house will be searched for shoes with soles that resemble the shoeprints from the crime scene.
	\item All exhibits are registered by the investigating unit and are then sent to the crime lab through the ``evidence office'' which is responsible for assigning file numbers and passing on the exhibits to the relevant lab. The investigators add a letter that contains details about the crime and the collection of the relevant evidence.
	\item All the exhibits and their packaging are documented and marked at the crime lab. The examiner documents all of the information concerning the exhibits as they were received by the lab (date of reception, description of packaging and exhibit etc.) (SWGTREAD, 2008). This is known as the ``chain of custody.'' Keeping the chain of custody is important to ensure that the exhibits examined by the lab are the same exhibits confiscated in this case.
	\item All the shoeprints (exhibits, white adhesive lifters, black gelatin lifters and casts) and the shoes are photographed at high resolution (1000 dpi). The photographs taken at the crime scene are calibrated to 1000 dpi as well. Minimal image processing is carried out if necessary (conversion to gray scale, contrast and brightness adjustment etc.).
	\item If applicable, the shoeprints collected at the crime scene go through a process of enhancement. White adhesive lifters are sprayed with Bromophenol Blue which has a yellow color that turns blue when reacting with the dust, thus creating the shoeprint (Glattstein et al., 1996 and Shor et al., 1998). Shoeprints on items collected at the crime scene are lifted using the most appropriate method, black gelatin with a press (Shor et al., 2003) or white adhesive lifter.
	\item If the patterns are similar, as explained in the next paragraph, two dimensional lab prints are made from the suspect's shoes. The shoe soles are dusted with fingerprint powder and then impressions are made, while wearing the shoe, by stepping onto a clear adhesive film. At least two lab prints are made from each shoe (Hilderbrand, 2007), first by walking and again by pressing the adhesive to the shoe sole while it is in the air. If necessary, for the comparison stage additional three dimension lab  prints are made using Biofoam © (SWGTREAD, 2005c).
	
	\item Comparison and evaluation Stages (SWGTREAD, 2006b):
	The first step is visual examination of the crime scene prints and the shoe soles.
	\begin{enumerate}
		
		\item [a.]  Pattern - If the exact shoe pattern does not match, the result of the comparison is exclusion.
		\item [b.] Size - If the shoe pattern matches, but the physical size of shoeprint differs from the corresponding area of the shoe sole this might be explained by incorrect scaling, photography at an angle, movement of the shoe while producing the shoeprint etc. If no explanation satisfies the examiner as a logical explanation for the dissimilarities, exclusion is determined.
		\item [c.] Wear - If exclusion is not determined by now, the degree of general wear and local wear are compared. If the wear areas differ or crime scene print is worn more than the shoe, exclusion is determined. If the shoe is worn more than the crime scene print, the time elapsed between creation of the shoeprint and confiscation of the shoes is considered.
		\item [d.] The crime scene print is searched for locations where the pattern isn't complete and the reason might be the presence of RACs. If they indeed appear on both lab and crime scene prints, they are marked and their clarity, complexity and rarity are evaluated based on the examiner’s experience.
	\end{enumerate}
	
	\item If the patterns are similar, lab prints are made from the suspect's shoes as described above and these are compared to the shoeprints found at the crime scene. The two common comparison methods are:
	\begin{enumerate}
		\item [i.]  Overlay - a transparency of the suspect's shoe lab prints is positioned over a photograph of the print from the crime scene.
		\item [ii.] Side by Side - the lab prints and the photograph of the print from the crime scene are laid out side by side in order to compare the similarities between them.
		The method used in the lab is on screen overlay (using Lucia TrasoScan, by LIM ©) with 1000 ppi (pixel per inch) images.
		The examination process includes comparison of class characteristics (pattern, size and wear) and the identification of accidentals (RACs and unique wear)
		
	\end{enumerate}

	\item The examiner determines the level of certainty and writes a report. An ordinal scale is used to describe the degree of the match (ENFSI, 2006). The scale used in the Israel National Police Division of Identification and Forensic Science (DIFS) is presented next.
\end{enumerate}
\textbf{The conclusion scale:}
\begin{enumerate}
	\item [a.] Negative - the shoe under investigation is significantly different from the crime scene print, and therefore the suspect's shoe couldn't have left the crime scene print.
	\item [b.] Indication of non-association - Differences were found during the comparison but the quality of the print or the essences of the differences are not sufficient for total exclusion.
	\item [c.] Lacks sufficient details - The crime scene shoeprint lacks information that would enable significant comparison.
	\item [d.] Cannot be eliminated - Similarities in class characteristics were found, but the details on the crime scene print were limited and therefore, specific association is not possible.
	\item [e.] Possible - there is a match regarding class characteristics between the crime scene print and the suspect shoe.
	\item [f.]	Probable - Matching class and identifying characteristics were both found, but the amount of information in the identifying characteristics is not sufficient to determine Identification.		
	\item [g.] Highly probable – Similar to Probable, but a higher degree of certainty.
	
	\item [h.] Identification - Besides the match in class characteristics, the match in identifying characteristics is of sufficient quality and quantity.
\end{enumerate}

Based on the conclusion distribution of the lab, in approximately 40\% of all cases the suspects' shoes sent to the lab result in a non-match to the crime scene prints. A match of class characteristics (“Possible”) is found in approximately a third of the cases and accidentals are found in nearly 25\% of all cases. Less than 2\% of the crime scene prints lack sufficient details for comparison.

\section{Descriptive statistics} \label{app:0}

\begin{figure}[!h]
	\begin{center}
		\includegraphics[width=0.5\textwidth]{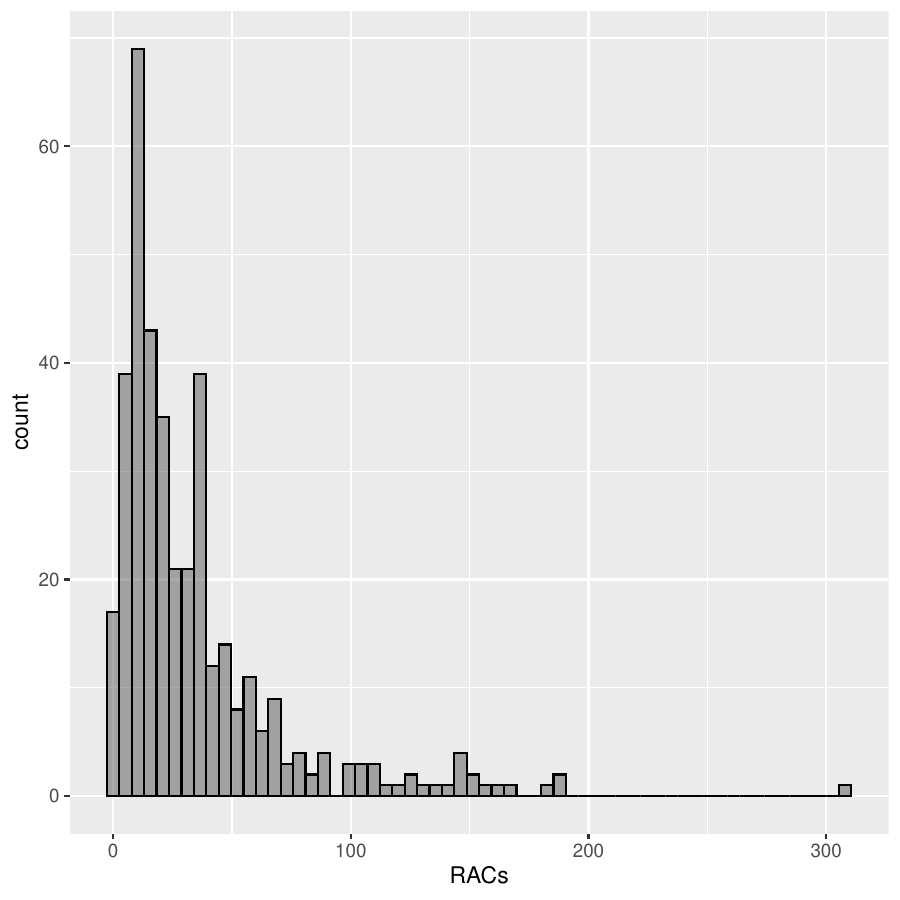}
	\end{center}
	\caption{A Histogram of the number of RACs per shoe}
	\label{fig:lab1}
\end{figure}

\begin{figure}[!h]
	\begin{center}
		\includegraphics[width=0.5\textwidth]{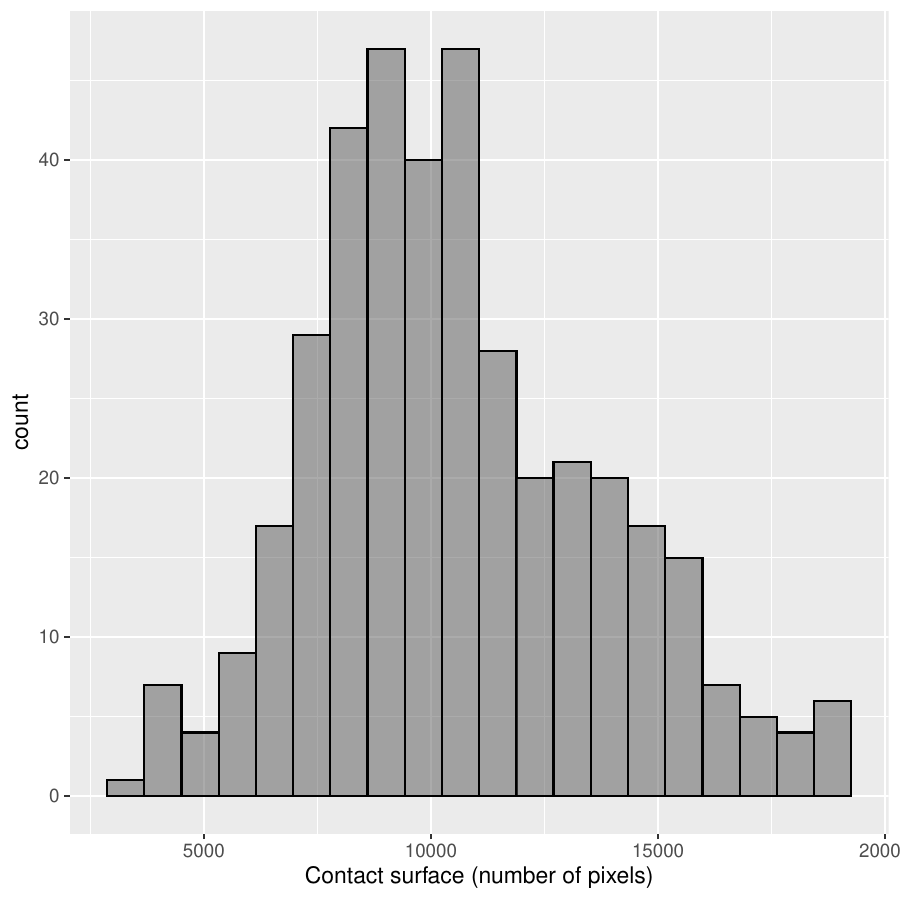}
	\end{center}
	\caption{A Histogram of the number of pixels with contact surface per shoe. An image of a shoe contains $395 \times 307 = 121,265$ pixels.}
	\label{fig:lab2}
\end{figure}

\begin{figure}[!h]
	\begin{center}
		\includegraphics[width=0.7\textwidth]{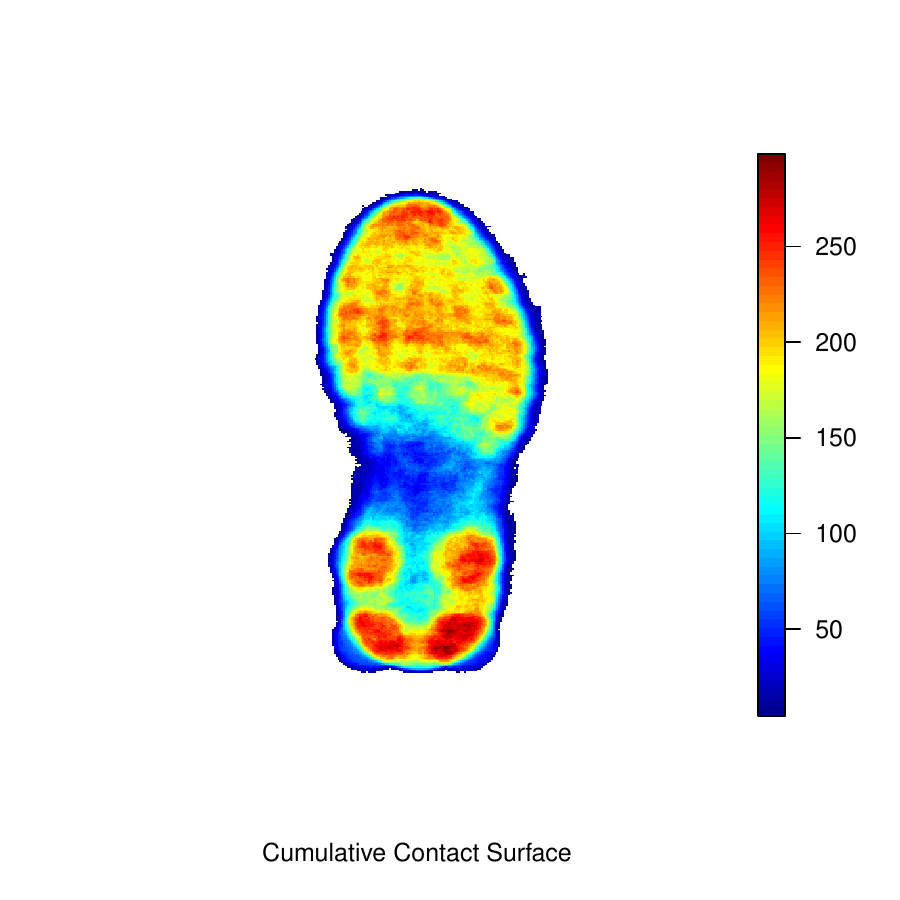}
	\end{center}
	\caption{A Cumulative Contact Surface of all shoes}
	\label{fig:comu}
\end{figure}

\begin{figure}[!h]
	\begin{center}
		\includegraphics[width=0.7\textwidth]{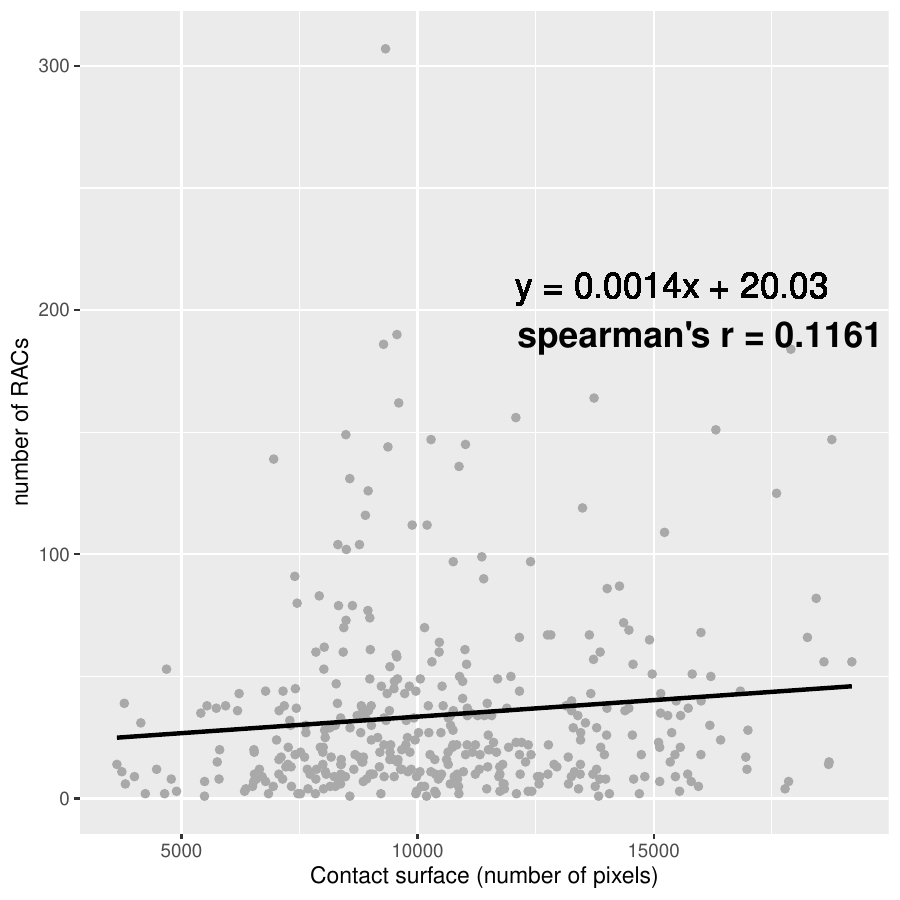}
	\end{center}
	\caption{A scatter plot of the contact surface and the number of RACs}
	\label{fig:lab3}
\end{figure}

\newpage

\section{Representation of the model parameters} \label{app:1}

Figure \ref{fig:par} presents the notation.	$\lambda_j$ is the intensity within pixel $j$, $a_i$ is the wear and tear parameter of shoe $i$, $S_{ij}$ is the area of the contact surface of shoe $i$ and pixel $j$ and $n_{ij}$ is the observed number of RACs on shoe $i$ and pixel $j$.

\begin{figure}[]
	\centering
	\begin{overpic}[width=0.175\textwidth]{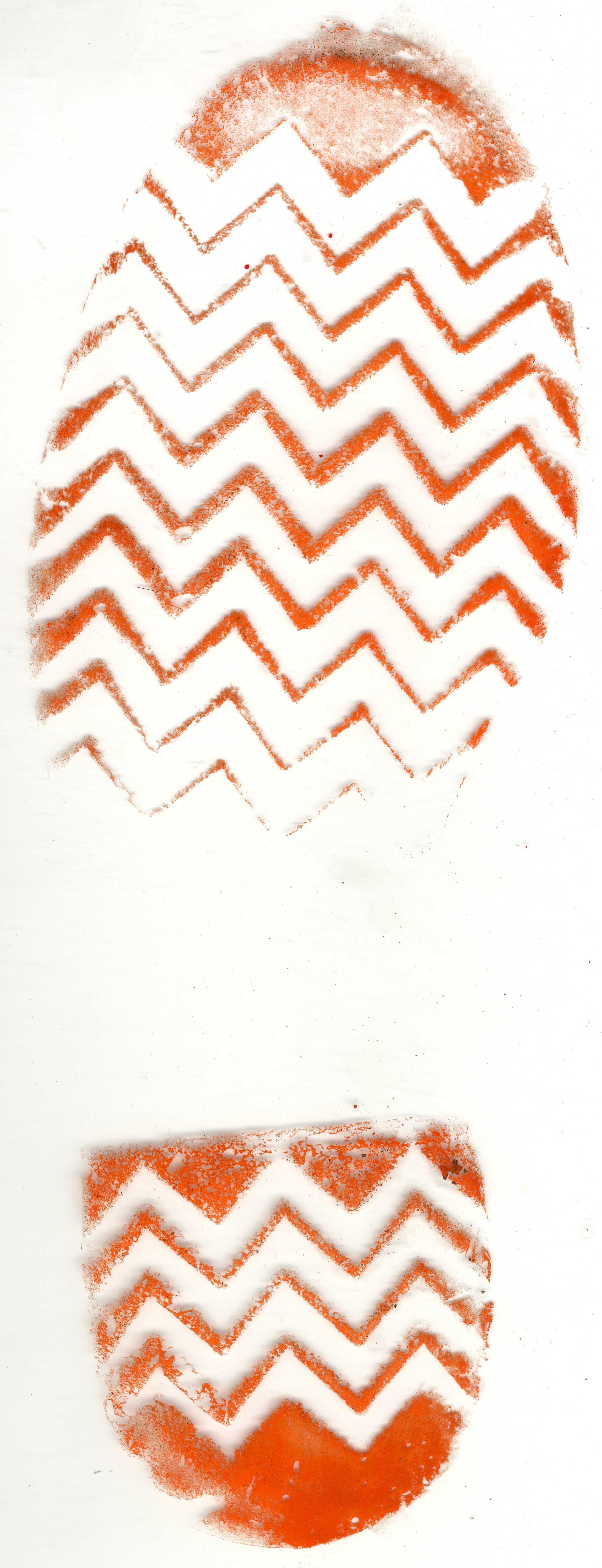}
		
		\put(-50,180){{\parbox{0.05\linewidth}{%
					\centerline{Shoe $i$}}}}
		

		\put(105,180){{\parbox{0.05\linewidth}{%
					\small{$$a_i , \lambda_j =? $$}}}}
		\put(105,165){{\parbox{0.05\linewidth}{%
					\tiny{$$i=1, \cdots ,M $$}}}}
		\put(105,160){{\parbox{0.05\linewidth}{%
					\tiny{$$j=1, \cdots, J$$}}}}
		
		\put(22,118){\vector(-4,1){45}}
		\put(-60,135){{\parbox{0.05\linewidth}{%
					\small{$$ n_{ir}=1$$}}}}
		
		\put(12,7){\vector(-4,1){30}}
		\put(-60,20){{\parbox{0.05\linewidth}{%
					\small{$$S_{ik}=0  $$}}}}
		\put(12,42){\vector(-4,1){28}}
		\put(-60,55){{\parbox{0.05\linewidth}{%
					\small{$$S_{i\ell}=0.5  $$}}}}

		\put(0,0){\line(1,0){75}}
		\put(0,5){\line(1,0){75}}
		\put(0,10){\line(1,0){75}}
		\put(0,15){\line(1,0){75}}
		\put(0,20){\line(1,0){75}}
		\put(0,25){\line(1,0){75}}
		\put(0,30){\line(1,0){75}}
		\put(0,35){\line(1,0){75}}
		\put(0,40){\line(1,0){75}}
		\put(0,45){\line(1,0){75}}
		\put(0,50){\line(1,0){75}}
		\put(0,55){\line(1,0){75}}
		\put(0,60){\line(1,0){75}}
		\put(0,65){\line(1,0){75}}
		\put(0,70){\line(1,0){75}}
		\put(0,75){\line(1,0){75}}
		\put(0,80){\line(1,0){75}}
		\put(0,85){\line(1,0){75}}
		\put(0,90){\line(1,0){75}}
		\put(0,95){\line(1,0){75}}
		\put(0,100){\line(1,0){75}}
		\put(0,105){\line(1,0){75}}
		\put(0,110){\line(1,0){75}}
		\put(0,115){\line(1,0){75}}
		\put(0,120){\line(1,0){75}}
		\put(0,125){\line(1,0){75}}
		\put(0,130){\line(1,0){75}}
		\put(0,135){\line(1,0){75}}
		\put(0,140){\line(1,0){75}}
		\put(0,145){\line(1,0){75}}
		\put(0,150){\line(1,0){75}}
		\put(0,155){\line(1,0){75}}
		\put(0,160){\line(1,0){75}}
		\put(0,165){\line(1,0){75}}
		\put(0,170){\line(1,0){75}}
		\put(0,175){\line(1,0){75}}
		\put(0,180){\line(1,0){75}}
		\put(0,185){\line(1,0){75}}
		\put(0,190){\line(1,0){75}}
		\put(0,195){\line(1,0){75}}
		
		\put(0,0){\line(0,1){195}}
		\put(5,0){\line(0,1){195}}
		\put(10,0){\line(0,1){195}}
		\put(15,0){\line(0,1){195}}
		\put(20,0){\line(0,1){195}}
		\put(25,0){\line(0,1){195}}
		\put(30,0){\line(0,1){195}}
		\put(35,0){\line(0,1){195}}
		\put(40,0){\line(0,1){195}}
		\put(45,0){\line(0,1){195}}
		\put(50,0){\line(0,1){195}}
		\put(55,0){\line(0,1){195}}
		\put(60,0){\line(0,1){195}}
		\put(65,0){\line(0,1){195}}
		\put(70,0){\line(0,1){195}}
		\put(75,0){\line(0,1){195}}

	\end{overpic}
	
	\captionof{figure}{Representation of the model parameters on the lab print}
	\label{fig:par}
\end{figure}

\newpage

\section{Variance of estimators under the naive approach}

Under the naive approach, $\hat{\lambda}_j=\sum_{i} N_{ij}/\sum_{i} S_{ij}$.
Since $N_{.j}|a_{i}\sim \mbox{Poisson}(\sum_{i}\lambda_{j}a_{i}S_{ij})$, where $N_{.j}=\sum_{i}N_{ij}$ and it is assumed that $\E(a_i)=1$, the variance of the estimator is
$$
\begin{array}{rl} \label{var_naive}
\var\left(\frac{1}{|m_j|}\sum_{i\in m_j} \frac{N_{ij}}{S_{ij}}\right)&=	\frac{1}{|m_j|^2}\sum_{i\in m_j}\frac{\var(N_{ij})}{S_{ij}^2}
=\frac{1}{|m_j|^2}\sum_{i\in m_j}\frac{\var\left(\E(N_{ij}|a_i)\right)+\E\left(\var(N_{ij}|a_i)\right)}{S_{ij}^2}\\ \\
&=\frac{1}{|m_j|^2}\sum_{i\in m_j}\frac{S_{ij}^2 \lambda_j^2 \var(a) +S_{ij} \lambda_j}{S_{ij}^2}
=\frac{\lambda_j^2 \var(a)}{|m_j|}+\frac{\lambda_j}{|m_j|^2}\sum_{i\in m_j}\frac{1  }{S_{ij}},
\end{array}
$$
which under maximal resolution ($S_{ij}=0,1$) reduces to
\begin{equation}\label{var_naive_pix}
\var(\hat{\lambda}_j)=\frac{\lambda_j^2 \var(a)+\lambda_j}{|m_j|}.
\end{equation}	

Let  $U_i=\frac{N_{i}^{2}-N_i}{(\sum_{j}\lambda_{j}S_{ij})^{2}}$, then simple calculations show
\begin{align*}
\E(U_i)&=
\frac{\E\left(\E^2(N_i|a_i)+\var(N_{i}|a_i)-\E(N_{i}|a_i)\right)}{(\sum_{j}\lambda_{j}S_{ij})^{2}}\\
& =\frac{\E\left((a_i\sum_{j}\lambda_{j}S_{ij})^2\right)}{(\sum_{j}\lambda_{j}S_{ij})^{2}}
=\var(a_i)+E^2(a_i)=\var(a)+1. \\
\end{align*}
Thus, by defining $\hat{U}_i=\frac{N_{i}^{2}-N_i}{(\sum_{j}\hat{\lambda}_{j}S_{ij})^{2}}$ with $\hat{\lambda}_j$ being the naive estimator, $\var(\hat{\lambda}_j)$ is readily estimated by plugging $\hat{\lambda}_j$ and $\widehat{\var}(a)=m^{-1}\sum_i\hat{U}_i-1$ in \eqref{var_naive_pix}.

\section{The case of a single shoe model}

\begin{proposition} \label{samemod}
	
	If $S_{ij}=S_{j}, \forall i,j$, then $$\frac{\hat{\lambda}_j}{\hat{\lambda}_k}=\frac{n_{.j}}{n_{.k}}\cdot \frac{S_{k}}{S_{j}}$$ for any $k,j$ in all three estimators.
	
\end{proposition}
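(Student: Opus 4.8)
The plan is to show that, under the hypothesis $S_{ij}=S_j$, each of the three estimators satisfies a first-order condition of the common form
\[
\frac{n_{\cdot k}}{\hat\lambda_k\,S_k}=C,\qquad k=1,\dots,J,
\]
where $C$ is a quantity that does not depend on $k$. Granting this, writing the identity for indices $j$ and $k$ and dividing eliminates $C$ and delivers $\hat\lambda_j/\hat\lambda_k=(n_{\cdot j}/n_{\cdot k})(S_k/S_j)$ at once. Throughout I assume $S_j>0$ (areas with $S_j=0$ carry no observable RACs and are excluded), so that $m_j=\{1,\dots,m\}$ and $|m_j|=m$ for every $j$.

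For the naive estimator the identity is immediate: by \eqref{unbiased}, $\hat\lambda_j=\tfrac1m\sum_i n_{ij}/S_j=n_{\cdot j}/(mS_j)$, so $C=m$. For the CML estimator I would start from the score equation \eqref{cml_piece}. Once $S_{ij}=S_j$, the denominator $\sum_{j'}S_{ij'}\lambda_{j'}=\sum_{j'}S_{j'}\lambda_{j'}=:D$ no longer depends on $i$; summing the equation over $i$ then gives $n_{\cdot k}/\hat\lambda_k=(S_k/D)\sum_i n_i$, i.e. $C=n_{\cdot\cdot}/D$. The scale restriction $\lambda_1=1$ plays no role, since the claimed identity involves only the scale-invariant ratio $\hat\lambda_j/\hat\lambda_k$.

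The random-effects case is the main obstacle and requires differentiating the marginal likelihood \eqref{ran_piece}. Writing it as $\prod_i L_i$ with $L_i=\int f_i(a)\,h_\theta(a)\,da$ and $f_i(a)=\prod_j e^{-\lambda_j a S_j}(\lambda_j a S_j)^{n_{ij}}/n_{ij}!$, I would compute $\partial\log f_i/\partial\lambda_k=n_{ik}/\lambda_k-aS_k$, hence $\partial L_i/\partial\lambda_k=(n_{ik}/\lambda_k)L_i-S_k\int a f_i(a)h_\theta(a)\,da$. Dividing by $L_i$, summing over $i$, and setting the score to zero yields
\[
\frac{n_{\cdot k}}{\hat\lambda_k}=S_k\sum_{i=1}^m\frac{\int a\,f_i(a)h_\theta(a)\,da}{L_i},
\]
so that $C=\sum_i \int a f_i h_\theta\,da/L_i$ is the sum of the posterior means of the $a_i$ evaluated at the MLE, manifestly free of $k$.

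The delicate point is exactly the appearance of $S_k$ as a common factor outside this sum: it is the factorization $S_{ij}=S_j$ that lets the $j'=k$ term $aS_{ik}=aS_k$ pull out of the integral, leaving a posterior expectation that depends on the full parameter vector but not on $k$. Without the hypothesis, $S_{ik}$ would remain shoe-specific inside the integral and no such separation would occur, which is precisely why the three methods differ in general. Beyond this observation the argument needs only the routine justification for differentiating under the integral sign (dominated convergence), after which all three cases close uniformly by taking ratios.
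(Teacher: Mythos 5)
Your proposal is correct and follows essentially the same route as the paper's own proof: in all three cases you derive the first-order condition $n_{\cdot k}/(\hat\lambda_k S_k)=C$ with $C$ free of $k$ (for the random-effects case the paper writes your posterior-mean quantity as $c_i=\E(a\,f_i(a))/\E(f_i(a))$ and reaches the identical expression $\hat\lambda_k=n_{\cdot k}/(S_k\sum_i c_i)$). The only cosmetic slip is that the score equation \eqref{cml_piece} is already summed over $i$, so no additional summation step is needed there.
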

\begin{proof}
	The naive estimator presented in 
	Section~4.1 reduces to:
	\begin{equation*}
	\hat{\lambda}_j=\frac{1}{m}\sum_{i} \frac{n_{ij}}{ S_{j}}=\frac{ n_{\cdot j}}{mS_{j}}
	\label{intuit2}
	\end{equation*}
	Note that in this case where the shoes are of the same shoe model, $m_j=m$.
	As mentioned in 
	Section~5, the CML estimator solves 
	equation (9) and in this case solves:
	\begin{equation*} \label{parlogcondl2}
	\sum_{i=1}^m\left[\frac{n_{ik}}{\hat{\lambda}_k}-\frac{n_i}{\sum_{j'} S_{j'}\hat{\lambda}_{j'} }S_{k}\right]=0.
	\end{equation*}
	It follows that
	\begin{equation*}
	\hat{\lambda}_k=\frac{n_{\cdot k}}{mS_{k}} \cdot\frac{m\sum_{j'}S_{j'} \lambda_{j'}}{n_{\cdot \cdot}} ,
	\label{solcml}
	\end{equation*}
	where $n_{\cdot \cdot}=\sum_{i=1}^m\sum_{j=1}^J n_{ij}$.
	The right fraction does not depend on $k$ and thus the ratio $\frac{\hat{\lambda}_j}{\hat{\lambda}_k}$ is equal to the ratio using the naive estimator.

	Using 
	equation (7) in 
	Section~5  the random effects log likelihood is proportional to the following equation (the $n_{ij}!$ is eliminated in the denominator)
	\begin{equation*}
	\label{l_ran_piece}
	\ell(\lambda_1,\ldots,\lambda_J,\theta;n_{ij})=\sum_{i=1}^m\log\int e^{-a\cdot\sum_{j=1}^J\lambda_{j}  \cdot S_{j}}\cdot a^{\sum_{j=1}^J n_{ij}} \Pi_{j=1}^J (S_{j}\lambda_J)^{n_{ij}} h_{\theta}(a)da
	\end{equation*}
	
	and the derivatives are
	\begin{equation}  \label{derrandnocont}
	\begin{aligned}
	\frac{\partial \ell}{\partial \lambda_k}&=\frac{\sum_{i=1}^m \frac{\partial}{\partial \lambda_k} \int e^{-a\sum_{j=1}^J S_{j} \lambda_{j} }\cdot a^{\sum_{j=1}^J n_{ij}}  \Pi_{j=1}^J (S_{j} \lambda_{j})  ^{n_{ij}} h_{\theta}(a)da}{\int e^{-a\sum_{j=1}^J S_{j} \lambda_{j} }\cdot a^{\sum_{j=1}^J n_{ij}}  \Pi_{j=1}^J ( S_{j} \lambda_{j})^{n_{ij}} h_{\theta}(a)da}\\
	&=-\sum_{i=1}^m\frac{ \int a \cdot S_{k}\cdot e^{-a\sum_{j=1}^J S_{j}\lambda_{j} }\cdot a^{\sum_{j=1}^J n_{ij}}  \Pi_{j=1}^J ( S_{j}\lambda_{j}) ^{n_{ij}} h_{\theta}(a)da}{\int e^{-a\sum_{j=1}^J S_{j}\lambda_{j} }\cdot a^{\sum_{j=1}^J n_{ij}}  \Pi_{j=1}^J (S_{j} \lambda_{j}) ^{n_{ij}} h_{\theta}(a)da}\\
	&+\sum_{i=1}^m \frac{\frac{n_{ik}}{\lambda_k} \int e^{-a\sum_{j=1}^J S_{j}\lambda_{j} }\cdot a^{\sum_{j=1}^J n_{ij}}  \Pi_{j=1}^J (S_{j} \lambda_{j}) ^{n_{ij}} h_{\theta}(a)da} {\int e^{-a\sum_{j=1}^J S_{j} \lambda_{j} }\cdot a^{\sum_{j=1}^J n_{ij}}  \Pi_{j=1}^J( S_{j} \lambda_{j})  ^{n_{ij}} h_{\theta}(a)da} \\
	&=-\sum_{i=1}^m\frac{\E(a \cdot S_{j}\cdot f_{i}(a))}{\E(f_{i}(a))}+\sum_{i=1}^m \frac{n_{ik}}{\lambda_k} \\
	&=-S_{k} \sum_{i=1}^m\frac{ \E(a \cdot f_{i}(a))}{\E(f_{i}(a))}+\sum_{i=1}^m \frac{n_{ik}}{\lambda_k}
	\end{aligned}
	\end{equation}
	where $f_{i}(a)=e^{-a\sum_{j=1}^J S_{j} \lambda_{j} }\cdot a^{\sum_{j=1}^J n_{ij}}  \Pi_{j=1}^J  (s_{j}\lambda_{j})  ^{n_{ij}}$.\\
	Note that $\frac{\E(a\cdot f_{i}(a))}{\E(f_{i}(a))}$ does not depend on $k$, i.e. it is equal for all $\lambda_k$ for any $h_{\theta}(a)$.  The random effects estimator is the solution of equating ~\eqref{derrandnocont} to 0. Denote the constant $\frac{\E(a\cdot f_{i}(a))}{\E(f_{i}(a))}$ by $c_i$ and $\sum_{i=1}^m n_{ik}$ by $n_{\cdot k}$.  Then, the estimator solves the following equation:
	\begin{equation*}
	-S_{k} \sum_{i=1}^m c_i+ \frac{n_{\cdot k}}{\hat{\lambda}_k}=0,
	\label{sol0randNoCont}
	\end{equation*}
	and thus, the random effects estimator is equal to:
	\begin{equation}
	\hat{\lambda}_k=\frac{n_{\cdot k}}{S_{k}\sum_{i=1}^m c_{i}}.
	\label{solrandNoCont}
	\end{equation}
	Thus, the ratio $\frac{\hat{\lambda}_j}{\hat{\lambda}_k}$ is equal for any $k,j$ in all three estimators.
\end{proof}
If $h_{\theta}(a)$ satisfies $c_i=\frac{\E(a\cdot f_{i}(a))}{\E(f_{i}(a))}=m$, 
all three estimators are identical.  An example of this is the case of $h_{\gamma}(a)=Gamma(\gamma,\gamma)$ where $\E(a)=1$ and $\var(a)=\frac {1}{\gamma}$. Here,
\begin{equation}
c_i=\frac{\E(a\cdot f_{i}(a))}{\E(f_{i}(a))}=\frac{n_{i\cdot}+\gamma}{\sum_{j=1}^J \lambda_j +\gamma}.
\label{ci}
\end{equation}
Summing Equation~\eqref{solrandNoCont} and placing $c_i$ of Equation~\eqref{ci} results in:
\begin{equation*}
\sum_{j=1}^J\hat{\lambda}_j=\frac{n_{\cdot \cdot}}{\frac{n_{\cdot \cdot}+m\gamma}{\sum_{j=1}^J\hat{\lambda}_j+\gamma}}.
\label{sumsolrandNoCont}
\end{equation*}
Thus, in this case:
\begin{equation}
\sum_{j=1}^J\hat{\lambda}_j=\frac{n_{\cdot \cdot}}{m}.
\label{sumressolrandNoCont}
\end{equation}
Now, summing Equation~\eqref{ci} and placing Equation~\eqref{sumressolrandNoCont} inside results in $\sum_{i=1}^m c_{i}=m$ and thus $\hat{\lambda}_j=\frac{ n_{\cdot j}}{m}$ and all three estimators are identical.

\end{document}